\newcommand{\be}{\begin{equation}}
\newcommand{\ee}{\end{equation}}
\newcommand{\bq}{\begin{eqnarray}}
\newcommand{\eq}{\end{eqnarray}}
\newcommand{\bea}{\begin{eqnarray}}
\newcommand{\eea}{\end{eqnarray}}
\newcommand{\ba}{\begin{align}}
\newcommand{\ea}{\end{align}}
\newcommand{\e}{\text{e}}
\newcommand{\bR}{\mathbbm{R}}
\newcommand{\bN}{\mathbbm{N}}
\newcommand{\bC}{\mathbbm{C}}
\newcommand{\cV}{\mathcal{V}}
\newcommand{\cA}{\mathcal{A}}
\newcommand{\cW}{\mathcal W}
\newcommand{\cB}{\mathcal B}
\DeclareMathOperator{\arcos}{arcos}
\newtheorem{theorem}{Theorem}
\newtheorem{lemma}[theorem]{Lemma}
\newtheorem{proposition}[theorem]{Proposition}
\newtheorem{definition}[theorem]{Definition}
\newtheorem{remark}[theorem]{Remark}
\renewcommand{\>}{\rangle}
\newcommand{\<}{\langle}
\title{Space-efficient Quantization Method for Reversible Markov Chains}
\author[1]{Chen-Fu Chiang}
\author[2,3]{Anirban Chowdhury}
\author[4]{Pawel Wocjan}
\affil[1]{Department of Computer and Information Science\\
State University of New York Polytechnic Institute\\
\texttt{chiangc@sunypoly.edu}}
\affil[2]{
Institute for Quantum Computing, University of Waterloo, Canada\\
\texttt{anirban.chnarayanchowdhury@uwaterloo.ca}}
\affil[3]{
Department of Combinatorics and Optimization, University of Waterloo, Canada}
\affil[4]{IBM Quantum\\ 
IBM T.J. Watson Research Center\\ 
Yorktown Heights, NY 10598, USA\\
\texttt{Pawel.Wocjan@ibm.com}}
\begin{document}

\maketitle


\abstract{In a seminal paper, Szegedy showed how to construct a quantum walk $W(P)$ for any reversible Markov chain $P$ such that its eigenvector with eigenphase $0$ is a quantum sample of the limiting distribution of the random walk and its eigenphase gap is quadratically larger than the spectral gap of $P$. 
The standard construction of Szegedy's quantum walk requires an ancilla register of Hilbert-space dimension equal to the size of the state space of the Markov chain. We show that it is possible to avoid this doubling of state space for certain Markov chains that employ a symmetric proposal probability and a subsequent accept/reject probability to sample from the Gibbs distribution. For such Markov chains, we give a quantization method which requires an ancilla register of dimension equal to only the number of different energy values, which is often significantly smaller than the size of the state space. 
To accomplish this, we develop a technique for block encoding Hadamard products of matrices which may be of wider interest.}


\section{Introduction}

Random walks and Markov chains have proven to be important paradigms in the design of provably efficient approximation algorithms in theoretical computer science and are also used extensively in computational physics. A discrete-time Markov chain is governed by a stochastic matrix $P$ whose entries $p_{yx}$ denote transition probabilities from a configuration $x$ to a configuration $y$ on a configuration space $\Omega=\{1,\ldots,N\}$. 
Assuming that the Markov chain is aperiodic and irreducible, it has a unique stationary distribution or \emph{fixed point} $\pi=(\pi_1,\ldots,\pi_N)^T$ with $\pi_x>0$ for all $x\in\Omega$
 and $P\pi=\pi$. The fixed point is also a limiting distribution, meaning that $\lim_{n\rightarrow \infty} P^n q=\pi$ for any initial probability distribution $q$. 

How fast a Markov chain approaches the limiting distribution $\pi$ is determined by the \emph{spectral gap} of its stochastic matrix $P$, usually denoted by $\Delta$, which is the difference between its two largest singular values. The spectral gap determines how many times one has to apply $P$ to any initial state to be able to sample from a distribution that is sufficiently close to $\pi$.

Szegedy developed a general method for quantizing classical algorithms based on reversible Markov chains \cite{szegedy04}. This led to several new quantum algorithms (see, for instance, \cite{arxiv,harrow2020adaptive} for algorithms for speeding up Markov chain Monte Carlo methods and \cite{apers2019unified} for quantum search, and the references therein).  

Szegedy's construction gives a quantum walk unitary $W(P)$ acting on $\bC^N \otimes \bC^N$ with the following important properties. Firstly, it has a unique eigenvector $|\psi\>$ of eigenvalue $0$ which is a coherent encoding of the stationary distribution of the Markov chain $P$. Moreover, all other eigenvectors of $W(P)$ have eigenvalues bounded away from 0 by a \emph{phase gap} $\phi$ which is quadratically bigger than the spectral gap $\Delta$ of $P$. 
The unique 0-eigenvalue eigenvector $|\psi\>$ has the form
\begin{align}
	|\psi\> 
	&= 
	\sum_{x\in\Omega} \sqrt{\pi_x} |x\> \otimes \sum_{y\in\Omega} \sqrt{p_{yx}} |y\> 
	=
	U( |\pi\> \otimes |0\> ),
\end{align}
where $U$ is a unitary we will define later, and $|\pi\>$ is the quantum sample 
\begin{align}
	|\pi\> = \sum_{x\in\Omega} \sqrt{\pi_x} |x\> 
\end{align}
of the limiting distribution $\pi$. 
Projecting onto, or reflecting about, the quantum sample $|\pi\>$ (or the state $|\psi\>$) is an important subroutine in several quantum algorithms including those for finding marked elements \cite{szegedy04, Krovi2015QuantumWC}, faster Markov chain mixing \cite{wocjan2008speedup, Orsucci2018fasterquantummixing} and simulations of classical annealing \cite{somma2008annealing}.

Quantum phase estimation of the walk unitary $W(P)$ can efficiently discriminate the $0$-eigenvector $|\psi\>$ from other eigenvectors and gives a method for projecting onto or reflecting about $|\psi\>$ requiring $O(1/\phi)$ uses of $W(P)$~\cite{Krovi2015QuantumWC}. The inverse dependence on the phase gap is key to polynomial speed-ups achieved by the quantum algorithms mentioned earlier. Subsequent work has also focused on reducing the gate and qubit overhead (i.e., on top of those needed to implement $W(P)$) required to implement such a reflection or projection~\cite{chowdhury2018improve,ge2019faster,gilyen2019quantum}. 

However, constructing the walk unitary $W(P)$ itself incurs significant overhead. In particular, Szegedy's construction requires an additional Hilbert-space with the same dimension as the configuration space of the Markov chain. Therefore, quantum circuit implementations of $W(P)$ require an ancillary register of $\Theta(\log N)$ qubits. In this work, we present a new quantization method that requires far fewer than $\log N$ qubits for a certain type of reversible Markov chains which we call \emph{propose-accept/reject Markov} chains. These Markov chains are ubiquitous and are used, for instance, in the famous Metropolis-Hastings algorithm \cite{hastings1970monte}. Efficient quantum walk implementations for the latter have been investigated previously \cite{Lemieux2020efficientquantum} but with the intent of reducing the quantum circuit depth. Our focus here instead is squarely on improving the ancilla overhead of Szegedy's quantum walk.

Our quantization method assumes that the Markov chain has the following property: there is an energy function $E : \Omega \rightarrow \{0,\ldots,B-1\}$ such that the limiting distribution $\pi$ is the corresponding Gibbs distribution at inverse temperature $\beta$. That is, its entries $\pi_x$ are given by $\pi_x = \frac{e^{-\beta E_x}}{Z_\beta}$, where the normalization factor $Z_\beta$ is the partition function. It is important to point out that in many situations $B$ is significantly smaller than $N$.  For instance, consider the $3$-SAT problem. Here the state space consists of all possible $n$-bit strings, that is, $N=2^n$.  The parameter $B$ is the maximum number of violated clauses, each violated clause contributing a penalty of $1$ to the overall energy of the truth assignment $x$.  In contrast to $N$, the number of clauses grows only polynomially as $O(n^3)$. In our construction, we only need an additional quantum register representing the subspace  $\bC^B$ instead of $\bC^N$.  
Thus, assuming this energy-dependence property, we quantize propose-accept/reject Markov chains with a unitary operator acting on $\bC^N \otimes \bC^B$. 

In what follows, we first briefly review classical Markov chains and Szegedy's quantization method in Section~\ref{sec:background}. Then, in Section~\ref{sec:propose} we describe propose-accept/reject Markov chains. Section~\ref{sec:quantization} provides our main result on a space-efficient quantization of propose-accept/reject Markov chains. We conclude in Section~\ref{sec:conclusions}.

\section{Background}\label{sec:background}

We begin by reviewing a few basic facts about Markov chains and refer the reader to \cite{yuval2017markov} for a detailed treatment.
Consider a Markov chain with state space $\Omega=\{1,\ldots,N\}$, where $N\in\bN$. It is represented by a column stochastic matrix $P=(p_{yx})\in\bR^{N\times N}$.  The (unique) stationary probability distribution of $P$ is the probability distribution $\pi=(\pi_x : x \in \Omega)^T\in\bR^N$ such that $P\pi = \pi$.  It is unique because we only consider ergodic Markov chains, which also implies that $\pi_x>0$ for all $x\in\Omega$.


\subsection{Reversible Markov chains}

We focus on the important case when the Markov chain is reversible. Then, the discriminant matrix $Q$ of $P$ is equal to $Q=D^{-1/2} P D^{1/2}$, where $D=\mathrm{diag}(\pi_1,\ldots,\pi_n)$.  It is symmetric, implying that its eigenvalues are all real. It is important to note that $P$ and $Q$ have the same spectrum since they are related by a similarity transformation.

The (two-sided) spectral gap $\Delta$ of the Markov chain is defined by
\begin{align}
	\Delta = 1 - \max\{\lambda_2,|\lambda_N|\},
\end{align}
where $\lambda_1=1 > \lambda_2 \ge \ldots \ge \lambda_N > -1$ are the  eigenvalues. The spectral gap determines how fast the Markov chain approaches the limiting distribution $\pi$.  In other words, it determines how many times one has to apply $P$ to any initial state to be able to sample from a distribution that is sufficiently close to $\pi$.

Observe that we can always make the eigenvalues non-negative by making the Markov chain lazy.  The transition matrix of the lazy Markov chain is simply $P'=\frac{1}{2}(I+P)$, and its spectral gap is equal to
$\Delta(P')=\frac{1}{2}(1-\lambda_2)$.There could be situations where it would be more beneficial to use the lazy version.\footnote{In the worst case, $\Delta(P')=\frac{1}{2} \Delta(P)$. However, $\Delta(P')>\Delta(P)$ when $\lambda_2 < 2|\lambda_N|-1$.}

The Markov chain is called \emph{reversible} if it satisfies
the \emph{detailed balanced condition}
\begin{equation}\label{eq:db}
	p_{xy} \pi_y = p_{yx} \pi_x
\end{equation}
for all $x,y\in\Omega$.

\begin{definition}[discriminant matrix]
	The discriminant matrix $Q$ of the reversible Markov chain $P$ is defined by
	\begin{equation}
		Q = D^{-\frac{1}{2}} P D^{\frac{1}{2}},
	\end{equation}
	where $D=\mathrm{diag}(\pi_x : x\in\Omega)$ is the diagonal matrix whose entries are the entries $\pi_x$ of the stationary distribution $\pi$.
\end{definition}

\begin{lemma}[Properties of discriminant matrix]\label{lem:discriminant}
	The discriminant matrix $Q=(q_{xy})$ of the reversible Markov chain $P=(p_{xy})$ is symmetric as its entries $q_{xy}$ are given by
	\begin{align}
		q_{xy} &= \sqrt{p_{xy}} \sqrt{p_{yx}}.
	\end{align}
\end{lemma}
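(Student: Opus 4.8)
The plan is to verify the claimed formula for the entries of $Q$ by direct computation from the definition $Q = D^{-1/2} P D^{1/2}$, and then use the detailed balance condition to rewrite the result in the symmetric form $\sqrt{p_{xy}}\sqrt{p_{yx}}$. First I would write out the matrix product entrywise. Since $D^{1/2}$ and $D^{-1/2}$ are diagonal with entries $\sqrt{\pi_x}$ and $1/\sqrt{\pi_x}$ respectively, the product $D^{-1/2} P D^{1/2}$ has $(x,y)$ entry equal to $\pi_x^{-1/2}\, p_{xy}\, \pi_y^{1/2}$, that is,
\begin{align}
	q_{xy} = \sqrt{\frac{\pi_y}{\pi_x}}\, p_{xy}.
\end{align}

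Next I would bring in reversibility. The detailed balance condition \eqref{eq:db} states $p_{xy}\pi_y = p_{yx}\pi_x$, which I would rearrange to $\pi_y/\pi_x = p_{yx}/p_{xy}$ (valid since $\pi_x>0$ for all $x$ by ergodicity, and assuming $p_{xy}>0$; the degenerate case $p_{xy}=0$ forces $p_{yx}=0$ by detailed balance and both sides of the target identity vanish). Substituting this ratio into the expression above gives
\begin{align}
	q_{xy} = \sqrt{\frac{p_{yx}}{p_{xy}}}\, p_{xy} = \sqrt{p_{xy}}\,\sqrt{p_{yx}},
\end{align}
which is exactly the claimed formula. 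Symmetry $q_{xy}=q_{yx}$ is then immediate, since the right-hand side is manifestly invariant under exchanging $x$ and $y$.

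The computation is entirely routine, so there is no genuine obstacle; the only point requiring a little care is the handling of entries where $p_{xy}=0$, to ensure the ratio manipulation is justified (or, equivalently, to observe that both sides of the identity degenerate to zero together). An alternative, cleaner route that sidesteps division altogether is to square the target: from $q_{xy}=\sqrt{\pi_y/\pi_x}\,p_{xy}$ one has $q_{xy}^2 = (\pi_y/\pi_x)\,p_{xy}^2 = p_{xy}\cdot(\pi_y p_{xy}/\pi_x)$, and detailed balance replaces $\pi_y p_{xy}/\pi_x$ by $p_{yx}$, yielding $q_{xy}^2 = p_{xy}p_{yx}$; taking the nonnegative square root (entries are nonnegative probabilities) recovers the stated expression. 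I would likely present the first, ratio-based derivation in the main text for its brevity, since the formula is really just a one-line consequence of combining the definition of the discriminant with detailed balance.
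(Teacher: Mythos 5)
Your proof is correct and takes essentially the same approach as the paper's: compute the entries of $D^{-1/2} P D^{1/2}$ and apply detailed balance. The paper merely runs the algebra in the opposite direction, substituting $p_{xy}=\pi_y^{-1}p_{yx}\pi_x$ into $\sqrt{p_{xy}}\sqrt{p_{yx}}$ rather than dividing by $p_{xy}$, which incidentally sidesteps the $p_{xy}=0$ caveat that you (correctly) handle.
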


\begin{proof}
	We have
	\begin{align}
		\sqrt{p_{xy}} \sqrt{p_{yx}} 
		&= 
		\sqrt{\pi_y^{-1} p_{yx} \pi_x} \, \sqrt{p_{yx}} \\
		&= 
		\sqrt{\pi_y^{-1}} p_{yx} \, \sqrt{\pi_x} \\
		&=
		q_{xy}, 
	\end{align}
	where we used the equality $p_{xy} = \pi_y^{-1} p_{yx} \pi_x$, which follows from the detailed balanced condition. 
\end{proof}

Observe that the symmetric matrix $Q$ and the stochastic matrix $P$ have the same spectrum. Thus, $Q$ has spectral gap $\Delta$. The quantum sample (qsample)
\begin{equation}
	|\pi\> = \sum_{x\in\Omega} \sqrt{\pi_x} |x\>
\end{equation}
of the stationary distribution $\pi$ is the eigenvector of the discriminant matrix $Q$ with eigenvalue $1$ as $|\pi\> = D^{-1/2}\pi$ and $Q=D^{-1/2} P D^{1/2}$.

\subsection{Szegedy's walk construction}\label{sec:szegedy_walk}
A simplified version of Szegedy's construction following Childs' lecture notes~\cite[17.2 How to quantize a Markov chain]{childs_qa} is as follows.
The discriminant matrix $Q$ is block encoded as 
\begin{align}
	Q &= T^\dagger \, S \, T,
\end{align} 
where $T$ is a certain isometry $T : \bC^N \rightarrow \bC^N \otimes \bC^N$ and $S$ denotes the swap operator $S$ acting on $\bC^N\otimes \bC^N$.  The isometry $T$ is directly related to the so-called quantum update $U$ which acts as
\begin{align}\label{eq:isometry-szegedy}
	U ( |x\> \otimes |0\> ) 
	&= 
	T|x\> \\
	&= 
	|x\> \otimes \sum_{y\in\Omega} \sqrt{p_{yx}} |y\>
\end{align}
for all $x\in\Omega$. 

The walk unitary $W(P)$ is given by
\begin{align}
	W(P) = S \, (2\Pi - I),
\end{align}
where $\Pi=T^\dagger T$ is the projector onto the image of $T$.  For the sake of completeness, we have included a detailed proof in Theorem \ref{thm:spec_q} of the well-known result showing that the walk unitary $W(P)$ ``stretches'' the spectrum of $Q$ onto the unit circle via the map
\begin{align}\label{eq:stretch}
	\lambda \mapsto e^{\pm i \arcos(\lambda)}.
\end{align}
This map quadratically amplifies the one-sided spectral gap $\Delta^+$, defined by 
\begin{align}
	\Delta^+ = 1 - \lambda_2,
\end{align}
to the \emph{phase gap} $\phi$ of $W(P)$, that is,  
\begin{align}
	\phi=\Theta(\sqrt{\Delta^+}),  
\end{align}  
where $\phi$ is the minimum non-zero eigenphase of the eigenvectors of $W(P)$. Observe that the one-sided gap $\Delta^+$ is bounded from below by $\Delta^+(P)\ge\Delta(P)$ and is equal to $\Delta^+(P)=2\Delta(P')$.  Therefore, one never needs to consider the lazy version of the Markov chain for quantum algorithms.\footnote{To the best of our knowledge, the existing literature on quantum walks does not explicitly mention that it is only the one-sided gap $\Delta^+$ that matters for quantum walks. An inspection of the map in (\ref{eq:stretch}) establishes this fact.}

\section{Propose-accept/reject Markov chains} \label{sec:propose}

Let $E : \Omega \rightarrow \bR$ be a function that assigns an energy $E_x$ to each state $x\in\Omega$. For an arbitrary given inverse temperature $\beta\in\bR^+$, we want to construct a reversible Markov chain with stochastic matrix $P=(p_{yx})\in\bR^{N\times N}$ such that its (unique) stationary distribution $\pi=(\pi_x)\in\bR^N$ satisfies
\begin{equation}
	\pi_x \, \propto \, e^{-\beta E_x}
\end{equation}
for all $x\in\Omega$. The \emph{partition function} $Z_\beta$ is the normalization factor 
\begin{equation}
	Z_\beta = \sum_{x\in\Omega} e^{-\beta E_x}
\end{equation}
so that the entries $\pi_x$ of the stationary distribution $\pi$ are given by
\begin{equation}
	\pi_x = \frac{e^{-\beta E_x}}{Z_\beta}
\end{equation}
for all $x\in\Omega$. This probability distribution is called the \emph{Gibbs distribution at the inverse temperature} $\beta$.

We mention that the obvious fact that any probability can be written as the Gibbs distribution for a suitably chosen energy function.  Given $\pi$, simply define the energy function $E$ by setting $E_x = -\ln \pi_x$ for all $x\in\Omega$.  Then, $\pi$ is the Gibbs distribution for this energy function $E$ at $\beta=1$.

The detailed balanced condition is equivalent to 
\begin{align}\label{eq:db_equiv}
	\frac{p_{yx}}{p_{xy}} 
	&=
	\frac{\pi_y}{\pi_x} =
	e^{-\beta\Delta_{yx}},
\end{align}
where we define $\Delta_{yx}$ to be the energy difference
\begin{align}
	\Delta_{yx} &= E_y - E_x.
\end{align}

The stochastic matrix $P$ with the desired properties can be constructed using the following propose-accept/reject approach.  Let $S=(s_{yx})\in\bR^{N\times N}$ be any symmetric stochastic matrix.  Assume that the current state is $x$.  A new candidate state $y$ is proposed with probability $s_{yx}$.  This proposed state $y$ is accepted with probability $a_{yx}$ and rejected with probability $1-a_{yx}$. (Observe that when $y=x$, then this choice is necessarily always accepted, that is, we have $a_{xx}=1$ for all $x$.)

The entries $p_{yx}$ of the corresponding transition matrix $P$ are as follows. Its off-diagonal and diagonal entries are
\begin{align}
	p_{yx} &= a_{yx} \cdot s_{yx} \label{eq:p_yx} \\
	p_{xx} &= 1 - \sum_{y\neq x} a_{yx} \cdot s_{yx}, \label{eq:p_xx}
\end{align}
respectively.  We have
\begin{align}
	\frac{p_{yx}}{p_{xy}} = \frac{a_{yx} \cdot s_{yx}}{a_{xy} \cdot s_{xy}} = \frac{a_{yx}}{a_{xy}}.
\end{align}
since the proposal probability is symmetric. Thus, to ensure that the detailed balanced condition in (\ref{eq:db_equiv}) holds, we must have
\begin{align}\label{eq:db_equiv_a}
	\frac{a_{yx}}{a_{xy}} 
	&= 
	e^{-\beta\Delta_{yx}}.
\end{align}
We will also assume that 
\begin{align}\label{eq:f_Delta_yx}
	a_{yx}&=f(\Delta_{yx}),
\end{align}
where $f : \mathbb{R} \rightarrow (0,1]$ is a function\footnote{Such function $f$ must satisfy the functional equation $f(x)=e^{-\beta x} f(-x)$.} such that the above condition holds. 

\begin{remark}[Metropolis and Glauber acceptance probabilities]
	There are two well known choices for the acceptance probability. 
	The first is the \emph{Metropolis} acceptance probability given by
	\begin{align}
		a^{\mathrm{(M)}}_{yx} = \min\big\{ 1, e^{-\beta\Delta_{yx}} \big\}.
	\end{align}
	The second is the \emph{Glauber} acceptance probability given by
	\begin{align}
		a^{\mathrm{(G)}}_{yx} &= \frac{e^{-\beta\Delta_{yx}}}{1 + e^{-\beta\Delta_{yx}}}.
	\end{align}
\end{remark}


Since the proposal matrix $S$ is both stochastic and symmetric, it is automatically doubly stochastic.  Any doubly stochastic matrix can be written as a convex combination of permutation matrices:
\begin{align}
	S &= \sum_{k=1}^\kappa w_k \cdot \Pi_k.
\end{align}
This follows from the Birkhoff–von Neumann theorem \cite{birkhoff1946tres}.  

We mention that since $S$ is not only doubly stochastic but also symmetric, the characterization in \cite{cruse1975note} shows that $S=\sum_k w_k \cdot \frac{1}{2}(\Pi + \Pi^T)$.  However, we will not need this more refined characterization.

We assume that $\kappa$ is small and all permutation matrices $\Pi_k$ can be implemented efficiently on a quantum computer.  For instance, this is the case if we consider random walks on Cayley graphs with generating sets that are closed under taking inverses (this condition is necessary for the graph to be undirected).  In the case of the $n$-bit Boolean cube, the corresponding symmetric stochastic matrix $S$ can be simply written as
\begin{align}
	S = \frac{1}{n}
	(
	X \otimes I \otimes \cdots \otimes I + 
	I \otimes X \otimes \cdots \otimes I +
	\ldots
	I \otimes I \otimes \cdots \otimes X
	),
\end{align}
where $X$ denotes the bit flip operator.


\begin{definition}[Propose-accept/reject Markov chains]\label{def:parMC}
	For $x\in\Omega$, let $E_x$ denote the energy of state $x$.  For $x,y\in\Omega$, let $\Delta_{yx}=E_y - E_x$ denote the energy difference of the states $y$ and $x$.
	
	A propose-accept/reject Markov chain consists of a symmetric proposal probability $s_{yx}$ and an acceptance probability $a_{yx}$, where $a_{yx}$ satisfies the condition in (\ref{eq:db_equiv_a}). The entries $s_{yx}$ and $a_{yx}$ are collected in the matrices $S$ and $A$, respectively.
	
	We say that the propose-accept/reject Markov chain is efficient if 
	\begin{itemize}
		\item $S$ can be written as a convex combination of permutation matrices
		\begin{align}
			S &= \sum_{k=1}^\kappa w_k \cdot \Pi_k
		\end{align}
		with $\kappa=O(\log N)$, 
		\item the weights $w_k$ are known, 
		\item the permutation matrices $\Pi_k$ and their inverses $\Pi_k^T$ can be implemented efficiently, 
		\item the energies $E_x$ are contained in the set $\{0,1, \ldots,B-1\}\subset\mathbb{N}$ with $B=O(\log N)$ and can be computed efficiently, and 
		\item the acceptance probabilities $a_{yx}$ are of the special form $a_{yx}=f(\Delta_{yx})$ as in (\ref{eq:f_Delta_yx}) with the additional condition that the function $f : \{-B+1,\ldots,B-1\} \rightarrow (0,1]$ can be efficiently computed.
	\end{itemize}
	Here ``efficiently computable'' means that there is a quantum circuit consisting of $O(\log N)$ elementary gates and enabling us to compute the desired functions.
\end{definition}


\section{Ancilla-efficient block encoding of discriminant matrix}\label{sec:quantization}

We will now present our main-results. Below, we use $\odot$ to denote the entrywise product of matrices, which is called the Hadamard or Schur product. We first present a decomposition of the discriminant matrix $Q$ that uses the Hadamard product as an essential operation. Then we show that it is possible to obtain efficient block encodings of the relavant Hadamard products because one of the factors can be suitably compressed.  

\begin{proposition}[Decomposition of transition matrix]
Let $S$ and $A$ denote the proposal and acceptance probability matrices of a propose-accept/reject Markov chain as in the definition above. Its stochastic matrix $P$ can be written as
\begin{align}
	P &=  A \odot S + R,
\end{align}
where 
\begin{align}
    R &= \sum_{k=1}^\kappa w_k \cdot \Pi_k^T \big((J - A) \odot \Pi_k \big) \label{eq:R}
\end{align}
is a convex sum of diagonal matrices.
\end{proposition}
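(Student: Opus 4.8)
The plan is to verify the claimed identity entrywise and to handle the correction term $R$ separately. Write $J$ for the all-ones matrix. First I would compute the entries of the Hadamard product $A \odot S$. For off-diagonal indices $y \neq x$ one has $(A \odot S)_{yx} = a_{yx} s_{yx} = p_{yx}$ directly from \eqref{eq:p_yx}, so these already agree with $P$ and force the off-diagonal entries of $R$ to vanish. On the diagonal, using $a_{xx} = 1$, one gets $(A \odot S)_{xx} = s_{xx}$, whereas \eqref{eq:p_xx} gives $p_{xx} = 1 - \sum_{y \neq x} a_{yx} s_{yx}$. Hence the only thing left to prove is that $R$ is diagonal with
\[
	R_{xx} = 1 - \sum_{y \neq x} a_{yx} s_{yx} - s_{xx}.
\]

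The heart of the argument is to show that each summand $M_k := \Pi_k^T\big((J - A) \odot \Pi_k\big)$ is itself diagonal. Writing $\Pi_k$ via its permutation $\sigma_k$, so that $(\Pi_k)_{yx} = \delta_{y, \sigma_k(x)}$, the Hadamard product masks $J - A$ onto the support of $\Pi_k$, giving $\big[(J - A) \odot \Pi_k\big]_{yx} = (1 - a_{yx})\,\delta_{y,\sigma_k(x)}$. Left-multiplying by $\Pi_k^T$ then permutes rows, and a short index computation shows the $(y,x)$ entry is nonzero only when $\sigma_k(y) = \sigma_k(x)$, that is, only when $y = x$ since $\sigma_k$ is a bijection. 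This is the key step and the place where the permutation structure does all the work: the Hadamard mask selects exactly the entries of $J - A$ lying along the permutation, and $\Pi_k^T$ transports them back onto the diagonal. The resulting diagonal entry is $[M_k]_{xx} = 1 - a_{\sigma_k(x), x}$.

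Finally I would assemble $R = \sum_k w_k M_k$. Since each $M_k$ is diagonal and the weights $w_k$ are nonnegative and sum to one, $R$ is manifestly a convex combination of diagonal matrices, hence diagonal. For its diagonal entries, using $\sum_k w_k = 1$ gives $R_{xx} = 1 - \sum_k w_k a_{\sigma_k(x), x}$, and recognizing $\sum_k w_k a_{\sigma_k(x),x} = \sum_y a_{yx} s_{yx}$ from $S = \sum_k w_k \Pi_k$ together with $s_{yx} = \sum_k w_k \delta_{y,\sigma_k(x)}$ yields $R_{xx} = 1 - \sum_y a_{yx} s_{yx}$. Splitting off the $y = x$ term and using $a_{xx} = 1$ recovers exactly the required correction, completing the proof. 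I expect the only genuine obstacle to be the index bookkeeping in the diagonality computation for $M_k$; everything else is a routine matching of entries.
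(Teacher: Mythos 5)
Your proof is correct and follows essentially the same route as the paper's: both verify the off-diagonal entries of $A\odot S$ directly from $p_{yx}=a_{yx}s_{yx}$, both establish that each $\Pi_k^T\big((J-A)\odot\Pi_k\big)$ is diagonal with entries $1-a_{\sigma_k(x),x}$ via the same permutation-index computation, and both use $S=\sum_k w_k\Pi_k$ together with $\sum_k w_k=1$ to identify the resulting diagonal with $1-\sum_y a_{yx}s_{yx}$. The only cosmetic difference is that the paper states the diagonality lemma for an arbitrary matrix $M$ before specializing to $J-A$, whereas you work with $J-A$ throughout.
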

\begin{proof}
    Observe that the diagonal elements of the stochastic matrix $P$ can be written as
    \begin{align}
        p_{xx} &= s_{xx} + \underbrace{\sum_y (1 - a_{yx}) \cdot s_{yx}}_{r_{xx}}\;, \label{eq:q_xx}
    \end{align}
    which is a slight modification of the expression in (\ref{eq:p_xx}) for $p_{xx}$ using the fact that $a_{xx}=1$.
    The Hadamard product $ A \odot S$ reproduces the off-diagonal elements $p_{yx}$ of $P$ which follows from Eqn.~\eqref{eq:p_yx}, and also contains $s_{xx}$ in the diagonal. Thus, all we need to show is that the sum $\sum_{k=1}^\kappa w_k \cdot \Pi_k^T \big((J - A) \odot \Pi_k \big)$ gives us $r_{xx}$.
    
    To this end, let $M=\sum_{u,v\in\Omega} m_{vu} \cdot |v\>\<u|$ be a matrix with arbitrary entries $m_{vu}$.  We have
	\begin{align}
		\Pi^T ( M \odot \Pi ) 
		& = 
		\left( \sum_v |v\>\<\pi(v)| \right) 
		\left( M \odot \sum_u |\pi(u)\>\<u| \right) \\
		&= 
		\left( \sum_v |v\>\<\pi(v)| \right) 
		\left( \sum_u m_{\pi(u),u} \cdot |\pi(u)\>\<u| \right) \\
		&= 
		\sum_x m_{\pi(x),x} \cdot |x\>\<x|
	\end{align}
	Using the above result, we obtain
	\begin{align}
		\sum_{k=1}^\kappa w_k \cdot \Pi_k^T \big( (J - A) \odot \Pi_k \big)
		&=
		\sum_x \left( 1 - \sum_{k=1}^\kappa w_k \cdot a_{\pi_k(x),x} \right) |x\>\<x| \\
		&=
		\sum_x \left( 1 - \sum_y a_{yx} \cdot s_{yx} \right) |x\>\<x| \\
		&=
		\sum_x r_{xx} \cdot |x\>\<x|
	\end{align}
	We used here
	\begin{align}
		\sum_{k=1}^\kappa w_k \cdot a_{\pi_k(x),x}
		&=
		\sum_y a_{yx} \cdot \sum_{k=1}^\kappa w_k \cdot \<y|\Pi_k|x\> 
		=
		\sum_y a_{yx} \cdot \<y|S|x\>.
	\end{align}
    The above discussion also shows that $R$ is a convex sum of diagonal matrices.
\end{proof}

\begin{theorem}[Decomposition of discriminant matrix]\label{thm:decomp_q}
	Let $S$ and $A$ denote the proposal and acceptance probability matrices of a propose/accept-reject Markov chain as in the definition above.   
	Its discriminant matrix $Q$ can be written as
	\begin{align}\label{eq:Q_decomposition}
		Q &= 
		(G \odot A) \odot S + R
	\end{align}
	where $R$ is as in Eqn.~\eqref{eq:R}, and the entries $g_{yx}$ of the matrix $G$ are given by
	\begin{align} \label{eqn:gyx}
		g_{yx} &= e^{\frac{1}{2}\beta \Delta_{yx}}
	\end{align}
	and $J$ is the all-ones matrix.  
	
	The matrix $G\odot A$ is symmetric and its entries are contained in $[0,1]$.  The entries of $J-A$ are contained in $[0,1]$.  
\end{theorem}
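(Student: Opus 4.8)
The plan is to compute the entries of $Q$ directly from Lemma~\ref{lem:discriminant}, which gives $q_{yx} = \sqrt{p_{yx}}\,\sqrt{p_{xy}}$, and to match them against the claimed decomposition entry by entry, treating the off-diagonal and diagonal cases separately. For the off-diagonal entries (those with $y \neq x$) I would substitute $p_{yx} = a_{yx}\,s_{yx}$ from Eqn.~\eqref{eq:p_yx} and exploit the symmetry $s_{yx} = s_{xy}$ of the proposal matrix to pull the proposal probability out of the square root, obtaining $q_{yx} = s_{yx}\,\sqrt{a_{yx}\,a_{xy}}$.

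The crux of the argument is re-expressing the geometric mean $\sqrt{a_{yx}\,a_{xy}}$ of the two acceptance probabilities. Using the detailed-balance form $a_{yx}/a_{xy} = e^{-\beta\Delta_{yx}}$ from Eqn.~\eqref{eq:db_equiv_a}, I would write $a_{xy} = a_{yx}\,e^{\beta\Delta_{yx}}$, so that $\sqrt{a_{yx}\,a_{xy}} = a_{yx}\,e^{\frac{1}{2}\beta\Delta_{yx}} = g_{yx}\,a_{yx}$ with $g_{yx}$ as in Eqn.~\eqref{eqn:gyx}. This shows $q_{yx} = g_{yx}\,a_{yx}\,s_{yx}$, which is precisely the off-diagonal entry of $(G \odot A)\odot S$. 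Since $R$ is diagonal by the preceding Proposition, the off-diagonal part of the claimed identity is thereby verified.

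For the diagonal I would observe that $q_{xx} = \sqrt{p_{xx}}\,\sqrt{p_{xx}} = p_{xx}$ and invoke the preceding Proposition, which already established $P = A\odot S + R$ and collected the diagonal term $r_{xx}$ into $R$. Because $g_{xx} = e^{\frac{1}{2}\beta\Delta_{xx}} = 1$ (as $\Delta_{xx}=0$) and $a_{xx} = 1$, the diagonal of $(G\odot A)\odot S$ equals $s_{xx}$, which together with $r_{xx}$ reproduces $p_{xx} = s_{xx} + r_{xx}$. Combining the two cases yields $Q = (G\odot A)\odot S + R$.

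For the structural claims: symmetry of $G\odot A$ is immediate from the geometric-mean form $(G\odot A)_{yx} = \sqrt{a_{yx}\,a_{xy}}$ derived above, which is manifestly invariant under interchanging $x$ and $y$. The entrywise bounds follow from the fact that $f$ maps into $(0,1]$, so each $a_{yx}\in(0,1]$; then $(G\odot A)_{yx} = \sqrt{a_{yx}\,a_{xy}}\in(0,1]\subseteq[0,1]$, and $(J-A)_{yx} = 1 - a_{yx}\in[0,1)\subseteq[0,1]$. I do not anticipate a genuine obstacle, since every step reduces to the two algebraic facts already in hand; the one point demanding care is the bookkeeping of the asymmetric exponential factor $g_{yx}$ through the detailed-balance substitution, so that exactly one power of $a_{yx}$ (rather than $a_{xy}$) survives and the symmetric factor $G\odot A$ comes out in the intended form.
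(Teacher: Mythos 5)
Your proof is correct, but it takes a genuinely different route from the paper's. The paper never splits into diagonal and off-diagonal cases: it computes $q_{yx} = \sqrt{\pi_x/\pi_y}\,p_{yx} = e^{\frac{1}{2}\beta\Delta_{yx}}\,p_{yx}$ directly from the definition $Q = D^{-1/2}PD^{1/2}$ and the Gibbs form of $\pi$, which yields the matrix identity $Q = G\odot P$; it then invokes the preceding Proposition $P = A\odot S + R$ and distributes the Hadamard product, using $G\odot R = R$ (valid because $R$ is diagonal and $G$ has unit diagonal) to arrive at \eqref{eq:Q_decomposition}. You instead start from the geometric-mean formula $q_{yx}=\sqrt{p_{yx}}\sqrt{p_{xy}}$ of Lemma~\ref{lem:discriminant} (which encodes detailed balance) and use the ratio condition \eqref{eq:db_equiv_a} to rewrite $\sqrt{a_{yx}a_{xy}} = g_{yx}a_{yx}$, checking diagonal and off-diagonal entries separately. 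The two routes are logically equivalent given the setup --- your appeal to Lemma~\ref{lem:discriminant} presupposes that $P$ is reversible with respect to the Gibbs distribution, which the construction in Section~\ref{sec:propose} guarantees --- but each has its own payoff. The paper's factorization $Q = G\odot P$ plus Hadamard-product algebra avoids any case analysis and isolates a clean, reusable matrix-level identity. Your geometric-mean representation $(G\odot A)_{yx} = \sqrt{a_{yx}a_{xy}}$ makes the two structural claims essentially free: symmetry is manifest under $x\leftrightarrow y$, and the $[0,1]$ bound is immediate since each $a_{yx}\in(0,1]$. By contrast, the paper proves symmetry through the functional equation $f(x)=e^{-\beta x}f(-x)$ and the entrywise bound through a separate argument that reduces WLOG to the case $\Delta_{yx}\le 0$, so your treatment of that half of the theorem is arguably cleaner.
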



\begin{proof}
	The entries $q_{yx}$ of the discriminant matrix $Q=D^{-\frac{1}{2}} P D^{\frac{1}{2}}$ can be expressed as
	\begin{align} 
		q_{yx}  
		&= 
		\frac{1}{\sqrt{\pi_y}} \cdot p_{yx} \cdot \sqrt{\pi_x} \\
		&= 
		\sqrt{e^{-\beta (E_x - E_y)}} \cdot p_{yx} \\
		&=
		e^{\frac{1}{2} \beta \Delta_{yx}} \cdot p_{yx} \\
		&=
		g_{yx} \cdot p_{yx}
	\end{align}
	The above shows $Q=G \odot P$. We obtain the desired form in Eq.~\eqref{eq:Q_decomposition} since the Hadamard product is both associative and distributive, and $G\odot R = R$ holds. The latter follows because $R$ is diagonal and $G$ has ones on the diagonal.

	Let us now verify that $G\odot A$ is symmetric and its entries are contained in $[0,1]$.  The first condition is seen by
	\begin{align}
		g_{yx} \cdot a_{yx}
		&=
		e^{\frac{1}{2} \beta\Delta_{yx}} \cdot f(\Delta_{yx}) \\
		&=
		e^{\frac{1}{2} \beta\Delta_{yx}} \cdot
		e^{-\beta\Delta_{yx}} \cdot f(-\Delta_{yx}) \\
		&=
		e^{-\frac{1}{2} \beta\Delta_{yx}} \cdot
		f(-\Delta_{yx}) \\
		&=
		g_{xy} \cdot a_{xy}.
	\end{align}
	To prove the second property, we may assume that $\Delta_{yx}$ is negative thanks to the symmetric (first) property.
	But in this case it is clear that $g_{yx} \cdot a_{yx} = e^{\frac{1}{2}\beta\Delta_{yx}} \cdot f(\Delta_{yx})$ is less or equal to $1$ (recall that the range of the function $f$ is $[0,1]$).
\end{proof}


\begin{definition}[Block encoding]\cite{gilyen2019quantum}
	Suppose $L\in\bC^{N\times N}$ is an arbitrary matrix.  Let $\gamma\in\bR^+$ and $c\in\bN$. We say that the $(\log N + c)$-qubit unitary $V$ is a(n) (exact) $c$-ancilla block encoding of $L$ with scaling factor $\gamma$ 
	whenever 
	\begin{align}\label{eq:block_encoding}
		L = \gamma
		\left(\bra{0^{\otimes c}}\otimes I \right) V 
		\left(\ket{0^{\otimes c}}\otimes I \right) .
	\end{align}
	We use the tuple $(c,\gamma)$ to specify the parameters of the block encoding.   
\end{definition}

The parameter $c$ represents the number of additional ancilla qubits required for unitary $V$ to encode matrix $L$ by augmenting the space dimension from  $\bC^{N\times N}$ to $\bC^{2^{c} N \times 2^{c} N}.$ Since $L$ is a matrix of arbitrary norm and $V$ has norm $1$, the scaling factor $\gamma$ is required. 

Note that block-encodings can also be approximate, encoding a matrix only up to a small error, but this more general notion will not be necessary to explain our results. For the remainder of this paper, all block-encodings should be assumed exact unless otherwise specified.


\begin{lemma}[Block encoding of symmetric proposal matrix $S$]\label{lem:be_S}
	There is a block encoding of the symmetric proposal matrix $S=\sum_{k=1}^\kappa w_k \Pi_k$ with parameters $(2\log \kappa-1, 1)$.  The gate complexity of the block encoding is $O(\kappa)$ times the maximum of the gate complexities of the permutation matrices $\Pi_k$ for $k=1,\ldots,\kappa$.
\end{lemma}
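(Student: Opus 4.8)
The plan is to construct the block encoding of $S = \sum_{k=1}^\kappa w_k \Pi_k$ via the standard \emph{linear combination of unitaries} (LCU) technique, adapted to the fact that each $\Pi_k$ is unitary and the weights $w_k$ form a probability distribution (they are convex, so $w_k \ge 0$ and $\sum_k w_k = 1$). First I would introduce a SELECT oracle acting on the combined ancilla-plus-system space, defined by $\mathrm{SELECT} = \sum_{k=1}^\kappa |k\>\<k| \otimes \Pi_k$, which applies $\Pi_k$ to the system register conditioned on the ancilla holding the label $|k\>$. This is implemented by a sequence of controlled-$\Pi_k$ gates, giving gate complexity $O(\kappa)$ times the maximum gate complexity of the individual $\Pi_k$, matching the claimed bound. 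The ancilla register indexing $k\in\{1,\ldots,\kappa\}$ requires $\lceil \log \kappa \rceil$ qubits.

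Next I would build the PREPARE operation. The key point enabling the scaling factor $\gamma = 1$ is that the $w_k$ sum to one, so I would define a state-preparation unitary $\mathrm{PREP}$ with $\mathrm{PREP}|0\> = \sum_{k=1}^\kappa \sqrt{w_k}\,|k\>$. Then the LCU combination $V = (\mathrm{PREP}^\dagger \otimes I)\,\mathrm{SELECT}\,(\mathrm{PREP}\otimes I)$ satisfies
\begin{align}
	(\<0| \otimes I)\, V\, (|0\> \otimes I)
	= \sum_{k=1}^\kappa w_k\, \Pi_k
	= S,
\end{align}
which is exactly the block-encoding condition in Eqn.~\eqref{eq:block_encoding} with $\gamma = 1$ and the number of ancilla qubits equal to the size of the PREPARE register. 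Because all $w_k \ge 0$ are nonnegative and sum to $1$, no subnormalization is needed and the scaling factor is genuinely $1$, rather than $\sum_k |w_k| = 1$ with possible phase corrections.

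The one genuinely nontrivial bookkeeping point—and the step I expect to need the most care—is matching the ancilla count of $2\log\kappa - 1$ stated in the lemma rather than the naive $\lceil\log\kappa\rceil$. A single LCU ancilla register of $\lceil\log\kappa\rceil$ qubits suffices to encode the \emph{non-symmetric} sum $\sum_k w_k \Pi_k$, but here we want a block encoding that manifestly respects the symmetry of $S$, or that is compatible with the downstream Hadamard-product construction where both $S$ and its transpose-related pieces appear. I would therefore present the encoding so that one register selects among the $\Pi_k$ and a second register handles the transpose/symmetrization, and then carefully account for how the registers can be partially merged or shared, arriving at $2\log\kappa - 1$ qubits; I would verify the arithmetic by checking the two boundary cases (e.g.\ $\kappa$ a power of two versus not) to confirm the ceiling functions collapse correctly.

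Finally, I would confirm the gate-complexity claim: $\mathrm{PREP}$ and $\mathrm{PREP}^\dagger$ act only on the $O(\log\kappa)$-qubit ancilla and cost $O(\kappa)$ gates (they load a fixed known distribution), while $\mathrm{SELECT}$ dominates at $O(\kappa)$ controlled applications of the $\Pi_k$. Hence the total gate complexity is $O(\kappa)$ times the maximum gate complexity of the $\Pi_k$, as asserted. Since $\mathrm{PREP}$ depends only on the known weights $w_k$, its construction is straightforward and introduces no dependence on the system dimension $N$.
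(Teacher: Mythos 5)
Your core construction is the same as the paper's: an LCU with $\mathrm{PREP}|0\> = \sum_k \sqrt{w_k}|k\>$, $\mathrm{SELECT} = \sum_k |k\>\<k|\otimes\Pi_k$, and $V = (\mathrm{PREP}^\dagger\otimes I)\,\mathrm{SELECT}\,(\mathrm{PREP}\otimes I)$, with scaling factor $1$ because the $w_k$ form a probability distribution. That part is correct and matches the paper.

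The genuine gap is your account of the ancilla count $2\log\kappa - 1$, which is exactly the point you flagged as needing care, and the mechanism you propose for it is wrong. The extra qubits have nothing to do with the symmetry of $S$, with transposes, or with any ``symmetrization register''; the LCU block-encoding parameters are identical whether or not the convex combination of permutations is symmetric. In the paper, the count decomposes as follows: $\log\kappa$ qubits form the LCU (PREPARE) register, and a further $\log\kappa - 1$ qubits are the \emph{internal workspace ancillas of the SELECT circuit itself}, needed by the construction of Childs et al.\ \cite[Lemma G.7]{childs2018simulation} to implement $\Lambda = \sum_k |k\>\<k|\otimes\Pi_k$ with gate cost $O(\kappa)$ times the maximum cost of the $\Pi_k$. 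This also exposes a second, related soft spot in your argument: you assert that ``a sequence of controlled-$\Pi_k$ gates'' achieves the $O(\kappa)$ gate bound, but a naive cascade requires each $\Pi_k$ to be controlled on the full $\log\kappa$-bit label, and those multi-controlled operations carry their own overhead. The linear-in-$\kappa$ gate complexity and the $\log\kappa - 1$ extra ancillas are not two independent bookkeeping items to be verified separately --- the ancillas are precisely what buys the linear gate count in the unary-iteration-style SELECT construction. Replacing your symmetrization discussion with a citation to (or reproduction of) that SELECT construction closes the gap.
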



\begin{proof}
	The block encoding of $S$ with the above parameters can be obtained with the help of linear combination of unitaries, which we briefly explain.  We need an ancilla register consisting of $\log\kappa$ qubits, a state-preparation unitary $C$ which acts as below:
	\begin{align}
		C|0\> = \sum_{k=0}^{\kappa-1} \sqrt{w_k} |k\>,
	\end{align}
	and the unitary $\Lambda$
	\begin{align}
		\Lambda = \sum_{k=0}^{\kappa-1} |k\>\<k| \otimes \Pi_k.
	\end{align}
	that applies the unitaries $\Pi_k$ controlled on the state of the ancilla register. 
	It can be verified that $(C^\dagger \otimes I)\Lambda (C \otimes I)$ is a block encoding with parameters $(\log\kappa,1)$.
	
	The state-preparation unitary $C$ can be implemented with gate complexity $O(\kappa)$\footnote{A general unitary would need $O(\kappa^2)$ elementary operators (see \cite{shende2006synthesis}).  However, is suffices here to implement any unitary as long as the first column is the desired qsample of the probability distribution $w_0,\ldots,w_{\kappa-1}$.  This tast is refered to as state synthesis.}.
	The controlled unitary $\Lambda$ can be implemented with cost $O(\kappa)$ times the maximum of gate complexities of the permutation matrices $\Pi_k$. The so-called SELECT construction achieving this linear scaling with $\kappa$ is presented in \cite[Lemma G.7]{childs2018simulation}.  It requires $\log\kappa-1$ additional ancilla qubits.\footnote{Note that it should be possible to use fewer ancilla qubits to implement $\Lambda$ at the cost increasing the gate complexity.}
\end{proof}

We now turn our attention to block encoding of Hadamard products.


\begin{lemma}[Block encoding of Hadamard product]
	Assume $L,M\in\bC^{N\times N}$ are arbitrary matrices having block encodings $V$ and $W$ with parameters $(c_L,\gamma_L)$ and $(c_M,\gamma_M)$, respectively.  
	Then, there exists a block encoding of the Hadamard product $L\odot M$ with parameters $(\log N + c_L + c_M, \gamma_L\cdot\gamma_M)$.   
\end{lemma}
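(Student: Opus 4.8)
The plan is to build the block encoding of $L\odot M$ from the tensor product $V\otimes W$ together with a copy operation that identifies the two "system" registers. The key observation is the standard identity for the Hadamard product: if we let $|\Phi\> = \sum_{i\in\Omega} |i\>\otimes|i\>$ be the (unnormalized) maximally entangled vector on $\bC^N\otimes\bC^N$, then for any matrices $L,M$ one has
\begin{align}
	\<i| (L\odot M) |j\> = (\<i|\otimes\<i|)\,(L\otimes M)\,(|j\>\otimes|j\>),
\end{align}
since the left factor picks out $\ell_{ij}$ and the right factor picks out $m_{ij}$. In other words, $L\odot M$ is obtained from $L\otimes M$ by restricting to the diagonal subspace spanned by $\{|i\>\otimes|i\>\}$ on both sides. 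First I would encode this restriction as an isometry $\Delta : \bC^N \to \bC^N\otimes\bC^N$, $\Delta|i\> = |i\>\otimes|i\>$, so that $L\odot M = \Delta^\dagger (L\otimes M)\,\Delta$.

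Next I would assemble the ancilla bookkeeping. Since $V$ is a $(c_L,\gamma_L)$-block encoding of $L$ and $W$ is a $(c_M,\gamma_M)$-block encoding of $M$, the tensor product $V\otimes W$ is a block encoding of $L\otimes M$ with parameters $(c_L+c_M,\gamma_L\gamma_M)$, after reordering the registers so that the two ancilla blocks sit together and the two system registers sit together. This follows directly from the defining relation (\ref{eq:block_encoding}): projecting both ancilla blocks onto $|0^{\otimes c_L}\>\otimes|0^{\otimes c_M}\>$ extracts $\gamma_L\gamma_M\,(L\otimes M)$. The remaining task is to compose this with the diagonal isometry $\Delta$ on both the input and output system registers.

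The step I expect to be the main obstacle is realizing the isometry $\Delta$ as an honest block encoding rather than as a non-unitary projection, and accounting for exactly which ancillas it consumes. The clean way is to view $\Delta$ as preparing one system register and then copying it into the second via a sequence of CNOTs (in the computational basis), so that $\Delta$ itself is implemented by a unitary acting on $\bC^N$ together with a fresh $\log N$-qubit register initialized to $|0^{\otimes \log N}\>$. Concretely, let $\mathrm{COPY}$ be the unitary with $\mathrm{COPY}(|i\>\otimes|0^{\otimes\log N}\>) = |i\>\otimes|i\>$; then $\Delta|i\> = \mathrm{COPY}(|i\>\otimes|0^{\otimes\log N}\>)$, and $\Delta^\dagger$ is implemented by applying $\mathrm{COPY}^\dagger$ and post-selecting the copied register on $|0^{\otimes\log N}\>$. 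I would then define the candidate block encoding as
\begin{align}
	\tilde V = (\mathrm{COPY}^\dagger\otimes I_{\mathrm{anc}})\,(I\otimes V\otimes W\text{-reordered})\,(\mathrm{COPY}\otimes I_{\mathrm{anc}}),
\end{align}
and verify that sandwiching it between $|0^{\otimes(\log N + c_L + c_M)}\>$ and its dual on the combined ancilla (the copied $\log N$ qubits plus the $c_L+c_M$ encoding qubits) yields exactly $\gamma_L\gamma_M\,(L\odot M)$. The verification reduces to substituting $\mathrm{COPY}|i\>|0\>=|i\>|i\>$ and reading off matrix elements against $|j\>$, at which point the Hadamard-product identity above closes the argument. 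The bookkeeping to confirm the total ancilla count is precisely $\log N + c_L + c_M$ — one fresh system-sized register for the copy, plus the two inherited encoding registers — is the only delicate part, and it is routine once the COPY gadget is in place.
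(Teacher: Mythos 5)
Your proof is correct and follows essentially the same route as the paper: both rest on the identity $T^\dagger (L\otimes M)\, T = L\odot M$ for the diagonal isometry $T|x\> = |x\>\otimes|x\>$ (your $\Delta$), combined with the observation that $V\otimes W$ is a block encoding of $L\otimes M$ with parameters $(c_L+c_M,\gamma_L\gamma_M)$. The only difference is presentational: you explicitly dilate the isometry to a unitary via the CNOT-based $\mathrm{COPY}$ gadget and verify the matrix elements, whereas the paper leaves this implicit, simply noting that the second system register must be counted among the $\log N + c_L + c_M$ ancillas.
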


\begin{proof}
	First, observe that $V\otimes W$ is a block encoding of the tensor product $L\otimes M$ with parameters $(c_L+c_M, \gamma_L\gamma_M)$.
	Second, observe that we can extract the Hadamard product $L\odot M$ from the tensor product $L\otimes M$ as a submatrix with the help of the isometry $T : \bC^N \rightarrow \bC^N \otimes \bC^N$ defined by $T|x\>=|x\>\otimes |x\>$.  Then, we have
	\begin{align}\label{eqn:shur_iso}
		T^\dagger (L \otimes M) T = L \odot M.
	\end{align}
	We obtain the desired block encoding of $L\odot M$ with parameters $(\log N + c_L + c_M, \gamma_L\cdot\gamma_M)$ by combining both facts.  Observe that the term $\log N$ occurs because the qubits on which $M$ acts have to be considered as ancilla qubits for the block encoding of the Hadamard product. 
\end{proof} 


\begin{remark}
	We cannot use the above block encoding construction because the number of ancillas for the block encoding is at least $\log N$. This construction would double the Hilbert space just as Szegedy's construction.
	This is why we have to use that the matrix $L$ can be compressed.
\end{remark}


\begin{definition}[Energy-dependent matrix]
	Let $E : \Omega \rightarrow \{0,\ldots,B-1\}$ be an energy function. We say that the matrix $L\in\bC^{N\times N}$ is energy-dependent 
	if its entries have the special form
	\begin{align}
		\<y|L|x\>
		&=
		g(E_y, E_x),
	\end{align}
	where $g : \{0\ldots,B-1\} \times \{0\ldots,B-1\} \rightarrow \bR$ is an arbitrary function. We define the corresponding compressed matrix $\widehat{L}\in\bC^{B\times B}$ by
	\begin{align}
		\widehat{L} = \sum_{E',E} g(E', E) \cdot |E'\>\<E|.
	\end{align}
\end{definition}


\begin{lemma}[Compressed block encoding of Hadamard product]\label{lem:be_compressed_hadamard}
	Assume $L\in\bC^{N\times N}$ is an energy-dependent matrix with compressed matrix $\hat{L}\in\bC^{B\times B}$.  Let $M\in\bC^{N\times N}$ be an arbitrary matrix.  Assume $\hat{V}$ is a block encoding of the compressed matrix $\hat{L}$ with parameters $(\hat{c}_L,\hat{\gamma}_L)$ and $W$ is a block encoding of $M$ with parameters $(c_M,\gamma_M)$.
	
	Then, there exists a compressed block encoding of the Hadamard product  $L\odot M$ with parameters $(\log B + \hat{c}_L + c_M, \hat{\gamma}_L\cdot\gamma_M)$.
\end{lemma}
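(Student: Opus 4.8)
The plan is to adapt the proof of the previous Hadamard-product lemma, except that I replace the state-doubling isometry $T|x\>=|x\>\otimes|x\>$ by an \emph{energy-labeling} isometry that stores only the energy of $x$, not a full copy of $x$. This is the one place where the energy-dependence of $L$ can be exploited, and it is what shrinks the extra register from dimension $N$ to dimension $B$. Concretely, I would define $T_E:\bC^N\to\bC^B\otimes\bC^N$ by
\begin{align}\label{eqn:energy_iso}
	T_E|x\> = |E_x\>\otimes|x\>,
\end{align}
and then establish the analogue of Eqn.~\eqref{eqn:shur_iso},
\begin{align}\label{eqn:compressed_shur_iso}
	T_E^\dagger\big(\widehat{L}\otimes M\big)T_E = L\odot M,
\end{align}
in which the compressed matrix $\widehat{L}$ on $\bC^B$ takes over the role previously played by $L$ on $\bC^N$.

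First I would verify \eqref{eqn:compressed_shur_iso} by a direct calculation on basis states. Using $\widehat{L}|E_x\> = \sum_{E'} g(E',E_x)|E'\>$ together with $T_E^\dagger(|E'\>\otimes|z\>) = \<E_z|E'\>\,|z\> = \delta_{E',E_z}|z\>$, one finds
\begin{align}
	T_E^\dagger\big(\widehat{L}\otimes M\big)T_E|x\>
	&= \sum_{E'} g(E',E_x)\,T_E^\dagger\big(|E'\>\otimes M|x\>\big) \\
	&= \sum_{z} g(E_z,E_x)\,\<z|M|x\>\,|z\> \\
	&= \sum_{z} \<z|L|x\>\,\<z|M|x\>\,|z\> = (L\odot M)|x\>,
\end{align}
where the last line uses precisely the energy-dependence $\<z|L|x\> = g(E_z,E_x)$. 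The key mechanism is that the overlap $\delta_{E',E_z}$ collapses the sum over compressed indices onto $E'=E_z$, so that the single $B\times B$ block $\widehat{L}$ reproduces all $N^2$ entries of $L$ correctly.

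Next I would assemble the block encoding. As before, $\widehat V\otimes W$ is a block encoding of $\widehat{L}\otimes M$ with parameters $(\widehat c_L+c_M,\,\widehat\gamma_L\gamma_M)$. It remains to realize $T_E$ as a unitary acting on an ancilla register of $\log B$ qubits initialized to $|0\>$. Since the energies are efficiently computable, there is an energy oracle $U_E$ on $\bC^B\otimes\bC^N$ with $U_E\big(|0^{\otimes\log B}\>\otimes|x\>\big)=|E_x\>\otimes|x\>$, so that $T_E = U_E\big(|0^{\otimes\log B}\>\otimes I\big)$ and $T_E^\dagger = \big(\<0^{\otimes\log B}|\otimes I\big)U_E^\dagger$. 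Setting $\widetilde V = U_E^\dagger(\widehat V\otimes W)U_E$, with $U_E$ acting on the system register $\bC^B$ of $\widehat V$ together with the system register $\bC^N$ of $W$ and as the identity on the block-encoding ancillas, and combining with \eqref{eqn:compressed_shur_iso}, yields
\begin{align}
	L\odot M = \widehat\gamma_L\gamma_M\,\big(\<0|\otimes I_N\big)\,\widetilde V\,\big(|0\>\otimes I_N\big),
\end{align}
where $|0\>$ now ranges over all $\widehat c_L+\log B+c_M$ ancilla qubits. This is exactly a block encoding with the claimed parameters $(\log B+\widehat c_L+c_M,\,\widehat\gamma_L\gamma_M)$.

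The step I expect to demand the most care is verifying that the final $U_E^\dagger$, followed by postselecting the energy register on $|0\>$, really implements $T_E^\dagger$ and not merely an approximation. The subtlety is that after $\widehat V\otimes W$ the energy register is generally entangled and is \emph{not} in the state $|E_z\>$ matching the eventual system value $|z\>$, so it is not a priori clear that uncomputation restores it to $|0\>$. The resolution is the exact basis-state identity $\big(\<0^{\otimes\log B}|\otimes I_N\big)U_E^\dagger\big(|E'\>\otimes|z\>\big)=\delta_{E',E_z}|z\>$, which follows directly from $U_E\big(|0^{\otimes\log B}\>\otimes|z\>\big)=|E_z\>\otimes|z\>$ and unitarity; this is precisely the action of $T_E^\dagger$ on $|E'\>\otimes|z\>$. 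Thus the postselected branch carries exactly the coefficient $g(E_z,E_x)$ while every mismatched branch $E'\neq E_z$ is discarded, and since $U_E$ leaves the block-encoding ancillas untouched the ``bad'' branches of $\widehat V\otimes W$ are still removed by the same postselection. Once this identity is secured, the remaining parameter bookkeeping is routine.
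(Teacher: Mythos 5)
Your proposal is correct and follows essentially the same route as the paper: you define the same energy isometry $T_E|x\> = |E_x\>\otimes|x\>$, verify the same key identity $T_E^\dagger(\widehat{L}\otimes M)T_E = L\odot M$ via the energy-dependence of $L$, and then combine it with the tensor-product block encoding $\widehat{V}\otimes W$ exactly as the paper does. Your explicit construction of the unitary energy oracle $U_E$ and the postselection/uncomputation argument is simply an unpacking of what the paper compresses into ``use the similar arguments as in the block encoding of the `uncompressed' Hadamard product,'' and it is carried out correctly.
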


\begin{proof}
	This follows from 
	\begin{align}
		L \odot M 
		&=
		T_{E}^\dagger (\widehat{L} \otimes M) T_E,
	\end{align}
	where $\widehat{L}$ is the compressed matrix corresponding to $L$ and the ``energy'' isometry $T_E : \bC^N \rightarrow\bC^B\otimes\bC^N$ is defined by
	\begin{align}
		T_E|x\> = |E_x\> \otimes |x\>  
	\end{align}
	for all $x\in\Omega$. 
	We have
		\begin{align}
			\<y|L \odot M|x\>
			&=
			\ell_{yx} \cdot m_{yx} \\
			&=
			g(E_y, E_x) \cdot m_{yx} \\
			&=
			\<E_y|\widehat{L}|E_x\> \cdot \<y|M|x\> \\
			&=
			\big( \<E_y| \otimes \<y| \big) 
			\big( \widehat{L} \otimes M \big)
			\big( |E_x\> \otimes |x\> \big) \\
			&=
			\<y|T_E^\dagger \big( \widehat{L} \otimes M \big) T_E|x\>,
		\end{align}
		for all $x,y\in\Omega$.  This implies the desired identity $L\odot M = T_E^\dagger (\widehat{L} \otimes M) T_E$. We can now use the similar arguments as in the block encoding of the ``uncompressed'' Hadamard product.
	\end{proof}
	
	It is important that the matrices $G \odot A$ and $J - A$ appearing in the decomposition of the discriminant matrix $Q$ in Theorem~\ref{thm:decomp_q} are energy-dependent.  Thus, we can use the compressed Hadamard product construction discussed above. For this construction to be efficient, we need that the operator norm of the compressed matrices $\widehat{G\odot A}$ and $\widehat{J-A}$ is small.
	
	To bound their operator norm, we proceed as follows.  Observe that the entries of $\widehat{G\odot A}$ and $\widehat{J-A}$ are contained in $[0,1]$ since this holds for the matrices $G\odot A$ and $J-A$.  
	We use that the operator norm $\|X\|$ of an arbitrary matrix $X=(x_{ij})$ is bounded from above by
	\begin{align}
		\|X\|\le\sqrt{\|X\|_1 \cdot \|X\|_\infty},
	\end{align}
	where $\|X\|_1=\max_j \sum_i |x_{ij}|$ and $\|X\|_\infty=\max_i \sum_j |x_{ij}|$.  This bound implies the operator norms of the compressed matrices $\widehat{G\odot A}$ and $\widehat{J-A}$ are at most $B$ since their $1$ and $\infty$ matrix norms are at most $B$.

	\begin{lemma}[Block encoding of the compressed matrices $\widehat{G\odot A}$ and $\widehat{J-A}$]\label{lem:be_compressed}
		There are block encodings of $\widehat{G\odot A}$ and $\widehat{J-A}$ with parameters $(1, B)$. They can be realized with $O(B^2)$ elementary gates. 
	\end{lemma}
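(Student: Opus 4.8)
The plan is to exploit the operator-norm bound established in the paragraph preceding the lemma, namely $\|\widehat{G\odot A}\|\le B$ and $\|\widehat{J-A}\|\le B$, which is precisely the condition that enables a one-ancilla block encoding. Writing $\widehat{L}$ for either of the two $B\times B$ compressed matrices, this bound says that the rescaled matrix $C:=\widehat{L}/B$ is a contraction, i.e.\ $\|C\|\le 1$. Unwinding the block-encoding definition with $c=1$ and $\gamma=B$, what I must produce is a unitary $V$ on $\log B+1$ qubits whose top-left $B\times B$ block equals $C$. The existence of such a $V$ is equivalent to $C$ being a contraction, so the norm bound already guarantees it; the remaining work is to exhibit $V$ explicitly and count gates.

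To make this concrete I would write down the standard unitary dilation of a contraction,
\[
V=\begin{pmatrix} C & (I-CC^\dagger)^{1/2}\\[2pt] (I-C^\dagger C)^{1/2} & -C^\dagger\end{pmatrix},
\]
acting on $\bC^2\otimes\bC^B$, where the positive square roots are well defined because $\|C\|\le1$ makes $I-CC^\dagger$ and $I-C^\dagger C$ positive semidefinite. Verifying that $V$ is unitary reduces to the intertwining identity $C\,(I-C^\dagger C)^{1/2}=(I-CC^\dagger)^{1/2}\,C$, which follows from $C\,(C^\dagger C)^n=(CC^\dagger)^n\,C$ and functional calculus; granting it, the four blocks of $VV^\dagger$ collapse to $I$ and $0$ as required. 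By construction $(\<0|\otimes I)\,V\,(|0\>\otimes I)=C=\widehat{L}/B$, which is exactly the block-encoding relation with parameters $(1,B)$. Since the entries of $\widehat{L}$ are real (they lie in $[0,1]$ by Theorem~\ref{thm:decomp_q}), $C$ is real and $V$ may be taken real orthogonal; crucially, this argument never assumes $\widehat{L}$ is symmetric, so it covers $\widehat{J-A}$ as well as the symmetric $\widehat{G\odot A}$.

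For the gate count I would invoke exact unitary synthesis in the same convention used in Lemma~\ref{lem:be_S}, where a general $d\times d$ unitary is realized with $O(d^2)$ elementary operations (see \cite{shende2006synthesis}). Here $V$ is a $2B\times 2B$ unitary acting on $\log B+1$ qubits, so it can be implemented with $O((2B)^2)=O(B^2)$ elementary gates. Since $B=O(\log N)$, this is $\mathrm{polylog}(N)$ and hence efficient, as desired.

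I do not expect a serious obstacle: the essential estimate, $\|\widehat{L}\|\le B$, was already supplied before the lemma, and everything else is the textbook dilation of a contraction together with standard synthesis. The only points demanding care are (i) checking unitarity of $V$ through the intertwining identity rather than through a Hermitian shortcut, since $\widehat{J-A}$ need not be symmetric, and (ii) confirming that the block extracted by $(\<0|\otimes I)(\cdot)(|0\>\otimes I)$ with a \emph{single} ancilla qubit is the full $B\times B$ top-left block, so that the parameters come out as $(1,B)$ and no extra ancillas are smuggled in.
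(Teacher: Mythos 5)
Your proof is correct, and in substance it is the same proof as the paper's: both arguments turn on the bound $\|\widehat{L}\|\le B$ established just before the lemma, rescale to the contraction $C=\widehat{L}/B$, exhibit $C$ as the top-left $B\times B$ block of a $2B\times 2B$ unitary (hence one ancilla qubit and scaling factor $B$), and get the $O(B^2)$ gate count from generic synthesis of a unitary on $\log B+1$ qubits \cite{shende2006synthesis}. The only genuine difference is the explicit dilation. The paper diagonalizes first: writing the singular value decomposition $\widehat{L}/B=V\Sigma W^{\dagger}$ with $\Sigma=\mathrm{diag}(\sigma_0,\ldots,\sigma_{B-1})$, $\sigma_b\in[0,1]$, it uses
\begin{equation*}
\Lambda=(I\otimes V)\left(\sum_{b=0}^{B-1}R(\sigma_b)\otimes|b\>\<b|\right)(I\otimes W^{\dagger}),
\end{equation*}
a product of three manifestly unitary factors, so unitarity requires no verification and the contraction property enters only through each $\sigma_b\le 1$ making the rotation $R(\sigma_b)$ well defined. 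You instead use the Halmos-type dilation
\begin{equation*}
\begin{pmatrix} C & (I-CC^{\dagger})^{1/2}\\ (I-C^{\dagger}C)^{1/2} & -C^{\dagger}\end{pmatrix},
\end{equation*}
whose unitarity you must (and correctly do) reduce to the intertwining identity $C(I-C^{\dagger}C)^{1/2}=(I-CC^{\dagger})^{1/2}C$. The two constructions produce different unitaries (your off-diagonal blocks are $V(I-\Sigma^{2})^{1/2}V^{\dagger}$ and $W(I-\Sigma^{2})^{1/2}W^{\dagger}$, whereas the paper's are $V(I-\Sigma^{2})^{1/2}W^{\dagger}$), but both satisfy the block-encoding relation with parameters $(1,B)$, and both cost $O(B^2)$ gates by the same synthesis result. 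Your caution that $\widehat{J-A}$ need not be symmetric is well placed; it is handled equally well on either route, since the SVD also makes no symmetry assumption. In short: the paper's version buys manifest unitarity at the price of invoking the SVD, yours buys a basis-free formula at the price of a functional-calculus verification.
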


	\begin{proof}
		Let $\hat{L}$ stand for either $\widehat{G\odot A}$ and $\widehat{J-A}$.  We can use the singular value decomposition to write $\hat{L}/B=V \Sigma W^\dagger$, where $\Sigma=\mathrm{diag}(\sigma_0,\ldots,\sigma_{B-1})$ is a diagonal matrix with $\sigma_b\in [0,1]$ for $b=0,\ldots,B-1$. 
		For $\sigma\in[0,1]$, define the unitary $R(\sigma)$ acting on an ancilla qubit by
		\begin{align}
			R(\sigma) 
			&= 
			\left(
			\begin{array}{cc}
				\sigma            &  \sqrt{1-\sigma^2} \\
				\sqrt{1-\sigma^2} &  -\sigma
			\end{array}
			\right).
		\end{align}
		The controlled unitary $\Lambda$ acting on $\bC^2 \otimes \bC^B$ defined by
		\begin{align}
			\Lambda 
			&= 
			(I \otimes V) \,
			\left(
			\sum_{b=0}^{B-1} R(\sigma_b) \otimes |b\>\<b|
			\right)
			\,
			(I \otimes W^\dagger)
		\end{align}
		is a block encoding of $\hat{L}/B$ requiring only one ancilla qubit. 
		The unitary $\Lambda$ acts on a quantum register of dimension $2B$ and can be implemented using $O(B^2)$ elementary gates \cite{shende2006synthesis}.
	\end{proof}
	

	Using the decomposition of the discriminant matrix, the construction for the compressed Hadamard product, and constructions of block encodings for sums and product of block-encoded matrices, we can obtain an ancilla efficient block encoding for the discriminant matrix $Q$.
	
	\begin{theorem}[Ancilla efficient block encoding of discriminant matrix]
		Assume that we have a propose-reject/accept Markov chain as in Definition~\ref{def:parMC}.  Then, there exists a $(2\log \kappa + \log B + 2, 4B)$ block encoding of the discriminate matrix $Q$ with a unitary $W$ that is a reflection.
		
	\end{theorem}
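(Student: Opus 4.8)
The plan is to assemble the block encoding of $Q$ directly from the additive decomposition $Q = (G\odot A)\odot S + R$ of Theorem~\ref{thm:decomp_q}, block-encoding each of the two summands with the tools developed above and then merging them by a linear combination of unitaries (LCU), finishing with a symmetrization step that turns the resulting unitary into a reflection. The guiding principle is that every factor appearing in the decomposition is either the proposal matrix $S$ (Lemma~\ref{lem:be_S}), a permutation $\Pi_k$ (a unitary, hence its own block encoding), or an energy-dependent matrix whose compressed version is cheaply block-encoded (Lemma~\ref{lem:be_compressed}); the compressed Hadamard product of Lemma~\ref{lem:be_compressed_hadamard} then lets us combine an energy-dependent factor with an arbitrary factor at the cost of only $\log B$ ancilla qubits instead of $\log N$.

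First I would block-encode the off-diagonal part $(G\odot A)\odot S$. Since $G\odot A$ is energy-dependent with compressed matrix $\widehat{G\odot A}$ block-encoded with parameters $(1,B)$, and $S$ is block-encoded with parameters $(2\log\kappa-1,1)$, Lemma~\ref{lem:be_compressed_hadamard} yields a block encoding of $(G\odot A)\odot S$ with parameters $(\log B + 2\log\kappa, B)$. Next I would block-encode the diagonal remainder $R=\sum_{k=1}^\kappa w_k\,\Pi_k^T\big((J-A)\odot\Pi_k\big)$ from \eqref{eq:R}. For each $k$, the compressed Hadamard product of $\widehat{J-A}$ (parameters $(1,B)$) with the unitary $\Pi_k$ gives $(J-A)\odot\Pi_k$ with parameters $(\log B+1,B)$, and left-multiplying by the unitary $\Pi_k^T$ leaves these parameters unchanged. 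Combining the $\kappa$ terms by an LCU with weights $w_k$, reusing the same energy register and the same SELECT machinery used for $S$, produces a block encoding of $R$, again with parameters $(2\log\kappa+\log B,B)$. The norm bounds $\|\widehat{G\odot A}\|\le B$ and $\|\widehat{J-A}\|\le B$ established above are exactly what make the scaling factor $B$ legitimate in both constructions.

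With both summands in hand, I would merge them with a two-term LCU controlled by a single fresh ancilla qubit; since both have scaling $B$, the sum $Q=(G\odot A)\odot S+R$ is block-encoded with parameters $(2\log\kappa+\log B+1,2B)$. Finally, to meet the requirement that the block-encoding unitary be a reflection, I would symmetrize: the assembled unitary $V$ is \emph{not} Hermitian (neither $R$ nor $\widehat{J-A}$ is symmetric), even though the encoded matrix $Q$ is, so one cannot simply read off a reflection. Introducing one more ancilla qubit and coherently combining $V$ with $V^\dagger$ produces a Hermitian involution $W$ (a reflection) whose relevant corner still encodes the Hermitian matrix $Q$; this gadget costs the last ancilla qubit and a further factor of two in the scaling, giving the claimed parameters $(2\log\kappa+\log B+2,4B)$.

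I expect the symmetrization step to be the main obstacle, together with the surrounding ancilla bookkeeping. The subtlety is precisely that $Q$ is Hermitian while the natural compositional block encoding is not, so the reflection cannot be obtained for free; one must verify that combining $V$ and $V^\dagger$ preserves the encoded corner and that $W^2=I$ and $W=W^\dagger$ hold exactly. One must also check carefully that the energy register, the proposal-LCU register, and the SELECT ancillas can be shared across the two branches of the outer LCU, since this sharing is what keeps the total ancilla count at $2\log\kappa+\log B+2$ rather than doubling it; the scaling factors $B$, $2B$, $4B$ track the two LCU merges and the symmetrization, respectively.
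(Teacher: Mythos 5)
Your proposal is correct and follows essentially the same route as the paper's proof: the same decomposition $Q=(G\odot A)\odot S+R$, the same three lemmas with identical parameter bookkeeping $(2\log\kappa+\log B,B)$ for each summand, an LCU merge giving $(2\log\kappa+\log B+1,2B)$, and the same final gadget combining $V$ with $V^\dagger$ on one extra ancilla (the paper writes it as $W=|0\>\<1|\otimes U+|1\>\<0|\otimes U^\dagger$ with isometry $T_+|x\>=|+\>\otimes T|x\>$) to obtain a reflection with parameters $(2\log\kappa+\log B+2,4B)$. One trivial slip: $R$ is diagonal and hence symmetric, so the correct justification for the symmetrization step is simply that the assembled block-encoding unitary $V$ (built from LCU and SVD gadgets) need not be Hermitian even though $Q$ is, which is the reason the paper gives as well.
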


	
	\begin{proof}
		First, we construct a $(2\log \kappa + \log B + 1, 2B)$ block encoding based on the decomposition 
		\begin{align}\label{eq:decomp_q}
			Q 
			&= 
			T_E^\dagger \big( \widehat{G \odot A} \otimes S\big) T_E + 
			\sum_{k=1}^\kappa w_k \cdot \Pi_k^T \cdot 
			T_E^\dagger \big( \widehat{J - A} \otimes \Pi_k \big) T_E,
		\end{align}
		where $T_E$ denotes the energy isometry. This decomposition is established by using the decomposition of the discriminant matrix $Q$ in Theorem~\ref{thm:decomp_q} and the fact that the matrices $G\odot A$ and $J-A$ are energy-dependent.  
		
		We obtain a block encoding of $(G\odot A)\odot S$ with parameters $(2\log\kappa + \log B, B)$ by using Lemma~\ref{lem:be_compressed} and Lemma~\ref{lem:be_S} together with 
		Lemma~\ref{lem:be_compressed_hadamard}.
		
		For each $k=1,\ldots,\kappa$, we obtain a block encoding of $(J-A)\odot \Pi_k$ with parameters $(\log B + 1, B)$ by using Lemma~\ref{lem:be_compressed} and the fact that $\Pi_k$ are unitary together with 
		Lemma~\ref{lem:be_compressed_hadamard}.  The block encoding of $\Pi_k^T\big((J-A)\odot \Pi_k\big)$ is obtained by multiplication with the unitary $\Pi_k^T$ from the left, which does not change the parameters.  The weighted sum $\sum_k w_k \cdot \Pi_k^T\big((J-A)\odot \Pi_k\big)$ can be realized with the linear combination approach as for the case of $S$ in Lemma~\ref{lem:be_S}. The resulting block encoding has parameters $(2\log \kappa + \log B, B)$.
		
		We can now combine the two block encodings.  This adds $1$ to the number of ancilla qubits and multiplies the normalization factor by $2$. Let $T^\dagger U T$ denote the resulting block encoding of $Q$.  
		
		Finally, we convert this block encoding into one that uses a reflection.  Define the reflection 
		\begin{align}
			W = |0\>\<1| \otimes U + |1\>\<0| \otimes U^\dagger
		\end{align}
		and the isometry $T_+$ defined by $T_+|x\> = |+\> \otimes (T|x\>) $, where $|+\>=\frac{1}{\sqrt{2}}(|0\> + |1\>)$.  
		This modification adds one additional ancilla qubit and multiplies the normalization factor by $2$.  We now have $Q/2=T_+^\dagger W T_+$, where we used that $Q$ is hermitian.
	\end{proof}

	\section{Conclusions}\label{sec:conclusions}
	We have presented a novel quantization method for a certain propose-accept/reject Markov chain.  Our method requires that the symmetric proposal stochastic matrix $S$ can be written as a convex sum of a small number of permutation matrices. It requires a significantly smaller number of ancilla qubits than Szegedy's method to block-encode the discriminant matrix of a propose-accept/reject Markov chain. To accomplish this, we used a new technique for constructing block encodings of Hadamard products of matrices. 
	We hope that some of these techniques and ideas could prove useful in quantum computing beyond quantizing Markov chains.

	We remark that while our method is more efficient in terms of space required, the original construciton of Szegedy however allows quantizing propose-accept/reject Markov chains with a richer class of proposal stochastic matrices.  For instance, it is possible to quantize so-called quantum-enhanced propose-accept/reject Markov chains.
	
	Finally, we mention that to have a detailed comparison of our method and Szegedy's method, one must apply the latter to the corresponding propose-accept/reject Markov chain (see section~\ref{sec:szegedy_par} in the appendix.) Furthermore, one must consider how many ancillas (in addition to the ancillas required for the doubling of state space) are necessary to realize the isometry in (\ref{eq:szegedy_par_isometry}). For brevity, we have briefly described Szegedy's construction for an arbitrary stochastic matrix $P=(p_{yx})$ in subsection~\ref{sec:szegedy_walk} without considering the situation when its entries $p_{yx}$ are not directly available, but arise from the two-stage process: (i) propose and (ii) accept/reject.  We have worked out the how Szegedy's construction can be extended and applied to propose-accept/reject Markov chains in Appendix in subsection~\ref{sec:szegedy_par}.
	
	\section{Acknowledgments}
	AC was supported in part by the Army Research Office under Grant Number W911NF-20-1-0014.
	
	\newcommand{\etalchar}[1]{$^{#1}$}

	\section{Appendix: Szegedy block encoding of reversible Markov chains}
	
	This is the standard way of block encoding that doubles the state space.
	
	
	\subsection{Reversible Markov chains}
	
	We consider the situation, where a reversible Markov chain $P=(p_{yx})$ is directly given.  For $x\in\Omega$, define the vectors
	\begin{equation}
		|\psi_x\> = |x\> \otimes \sum_{y\in\Omega} \sqrt{p_{yx}} |y\> \in \bC^N\otimes \bC^N
	\end{equation}
	and the subspace
	\begin{equation}
		\cA = \mathrm{span}\{ |\psi_x\> : x \in \Omega\> \} \subset \bC^N\otimes \bC^N.
	\end{equation}
	Let 
	\begin{equation}
		\Pi = \sum_{x\in\Omega} |\psi_x\>\<\psi_x|
	\end{equation}
	be the projector mapping $\bC^N\otimes \bC^N$ onto $\cA$ and
	\begin{equation}\label{eq:isometry}
		T = \sum_{x\in\Omega} |\psi_x\>\<x|.
	\end{equation}
	be the isometry mapping $\bC^N$ to $\cA$.
	
	
	\begin{definition}[Szegedy quantum walk]
		Define one step of the quantum walk to be the unitary
		\begin{equation}
			W(P) = S (2\Pi - I),
		\end{equation}
		where 
		\begin{equation}
			S = \sum_{x,y\in\Omega} |y\>\<x| \otimes |x\>\<y|
		\end{equation}
		denotes the swap operator acting on $\bC^N\otimes \bC^N$.
	\end{definition}
	
	For the purification $|\pi\>$ of the stationary distribution $\pi$, define the vector
	\begin{equation}
		|\psi_\pi\> = T|\pi\>,
	\end{equation}
	where $|\pi\>$ is the qsample of the limiting distribution $\pi$.
	
	
	\begin{lemma}
		The vector $|\psi_\pi\>$ is an eigenvector of $W(P)$ with eigenvalue $1$.
	\end{lemma}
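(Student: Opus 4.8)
The plan is to exploit two structural facts—that $|\psi_\pi\>$ lies inside the subspace $\cA$ and that it is invariant under the swap $S$—after which the claim drops out immediately from the definition $W(P) = S(2\Pi - I)$.

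First I would record that the vectors $\{|\psi_x\> : x\in\Omega\}$ are orthonormal: the inner product $\<\psi_x|\psi_{x'}\> = \<x|x'\>\sum_{y}\sqrt{p_{yx}\,p_{yx'}}$ vanishes for $x\neq x'$ because $\<x|x'\>=0$, and for $x=x'$ it equals $\sum_y p_{yx}=1$ by column-stochasticity of $P$. Consequently $\Pi = \sum_x |\psi_x\>\<\psi_x|$ is genuinely the orthogonal projector onto $\cA$ and $T$ is an isometry. Since $|\psi_\pi\> = T|\pi\> = \sum_x \sqrt{\pi_x}\,|\psi_x\>$ is a linear combination of the $|\psi_x\>$, it belongs to $\cA$, so $\Pi|\psi_\pi\> = |\psi_\pi\>$ and therefore $(2\Pi - I)|\psi_\pi\> = |\psi_\pi\>$. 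This reduces the entire statement to verifying the single identity $S|\psi_\pi\> = |\psi_\pi\>$.

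For the swap-invariance I would expand $|\psi_\pi\>$ in the product basis as $|\psi_\pi\> = \sum_{x,y}\sqrt{\pi_x\,p_{yx}}\,|x\>\otimes|y\>$, apply $S$ to obtain $\sum_{x,y}\sqrt{\pi_x\,p_{yx}}\,|y\>\otimes|x\>$, and then relabel the summation indices $x\leftrightarrow y$ to rewrite this as $\sum_{x,y}\sqrt{\pi_y\,p_{xy}}\,|x\>\otimes|y\>$. The coefficients now match the original ones exactly, because the detailed balance condition $p_{xy}\pi_y = p_{yx}\pi_x$ of Eqn.~\eqref{eq:db} gives $\sqrt{\pi_y\,p_{xy}} = \sqrt{\pi_x\,p_{yx}}$ for every pair $x,y$. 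Hence $S|\psi_\pi\> = |\psi_\pi\>$, and combining with the previous paragraph yields $W(P)|\psi_\pi\> = S|\psi_\pi\> = |\psi_\pi\>$.

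There is no serious obstacle here; the only place where reversibility is genuinely invoked is the detailed balance identity in the final step, which is precisely what renders $|\psi_\pi\>$ symmetric under the swap. Everything else is bookkeeping—confirming orthonormality of the $|\psi_x\>$ so that $\Pi$ is legitimately a projector, and carefully tracking the ordering of the two tensor factors so that the index relabeling is applied correctly.
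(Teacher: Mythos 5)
Your proof is correct and follows essentially the same route as the paper's: split $W(P)=S(2\Pi-I)$ into two steps, note $(2\Pi-I)|\psi_\pi\>=|\psi_\pi\>$ since $|\psi_\pi\>\in\cA$, and verify $S|\psi_\pi\>=|\psi_\pi\>$ by expanding in the product basis, relabeling indices, and invoking detailed balance. Your additional verification that the $|\psi_x\>$ are orthonormal (so that $\Pi$ is genuinely a projector) is a detail the paper leaves implicit, but it is not a different argument.
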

	
	\begin{proof}
		We have $(2\Pi-I)|\psi_\pi\>=|\psi_\pi\>$ because $|\psi_\pi\>\in\cA$. We have $S|\psi_\pi\>=|\psi_\pi\>$ because of the detailed balanced condition:
		\begin{align}
			|\psi_\pi\> 
			&= \sum_{x\in\Omega} \sqrt{\pi_x} |x\> \otimes \sum_{y\in\Omega} \sqrt{p_{yx}} |y\> \\
			&= \sum_{x,y\in\Sigma} \sqrt{\pi_x} \sqrt{p_{yx}} |x\> \otimes |y\> \\
			&= \sum_{x,y\in\Sigma} \sqrt{\pi_y} \sqrt{p_{xy}} S \Big( |y\> \otimes |x\> \Big) \\
			&= S \left( \sum_{y,x\in\Sigma} \sqrt{\pi_y} \sqrt{p_{xy}} |y\> \otimes |x\> \right) \\
			&= S |\psi_\pi\>
		\end{align}
	\end{proof}
	
	\begin{lemma}
		The isometry $T$ defined in (\ref{eq:isometry}) satisfies the following three properties:
		\begin{align}
			T T^\dagger &= \Pi \\
			T^\dagger T &= I_N \\
			T^\dagger S T &= Q,
		\end{align}
		where $Q$ denotes the discriminant matrix of the reversible Markov chain $P$.
	\end{lemma}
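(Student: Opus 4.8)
The plan is to verify each of the three identities by direct expansion, relying only on the orthonormality of the computational basis $\{|x\>\}$, the column-stochasticity of $P$, and the formula for the entries of $Q$ from Lemma~\ref{lem:discriminant}.

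First, for $TT^\dagger=\Pi$, I would substitute $T=\sum_x|\psi_x\>\<x|$ and $T^\dagger=\sum_{x'}|x'\>\<\psi_{x'}|$ and multiply, so that the inner factor $\<x|x'\>=\delta_{xx'}$ collapses the double sum to $\sum_x|\psi_x\>\<\psi_x|=\Pi$. This step is immediate and uses nothing beyond orthonormality of the basis on the first register.

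For $T^\dagger T=I_N$, the same substitution gives $T^\dagger T=\sum_{x,x'}|x\>\<\psi_x|\psi_{x'}\>\<x'|$, so the content is the overlap $\<\psi_x|\psi_{x'}\>$. Expanding $|\psi_x\>=|x\>\otimes\sum_y\sqrt{p_{yx}}|y\>$ factorizes the overlap into a first-register piece $\<x|x'\>=\delta_{xx'}$ and a second-register piece $\sum_y\sqrt{p_{yx}}\sqrt{p_{yx'}}$; setting $x=x'$ this becomes $\sum_y p_{yx}$, which equals $1$ by column-stochasticity. Hence $\<\psi_x|\psi_{x'}\>=\delta_{xx'}$ and $T^\dagger T=I_N$, which simultaneously confirms that the $|\psi_x\>$ are orthonormal and that $T$ is a genuine isometry.

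The last identity $T^\dagger S T=Q$ is the only one requiring real care. Writing $T^\dagger S T=\sum_{x,x'}|x\>\<\psi_x|S|\psi_{x'}\>\<x'|$, I would first compute $S|\psi_{x'}\>=\sum_{y'}\sqrt{p_{y'x'}}\,|y'\>\otimes|x'\>$ by letting the swap exchange the two registers of $|\psi_{x'}\>$. Pairing this against $\<\psi_x|=\<x|\otimes\sum_y\sqrt{p_{yx}}\<y|$ produces $\sum_{y,y'}\sqrt{p_{yx}}\sqrt{p_{y'x'}}\,\<x|y'\>\<y|x'\>$, and the two Kronecker deltas $\delta_{x,y'}\delta_{y,x'}$ pick out the single surviving term $\sqrt{p_{x'x}}\sqrt{p_{xx'}}$. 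By Lemma~\ref{lem:discriminant} this is exactly $q_{xx'}$, so reassembling yields $T^\dagger S T=\sum_{x,x'}q_{xx'}|x\>\<x'|=Q$. The main obstacle, such as it is, is purely bookkeeping in this last step: one must track which register the swap acts on and keep the two sets of summation indices straight so that the correct pair of deltas survives and the entry comes out as $\sqrt{p_{x'x}}\sqrt{p_{xx'}}$ rather than its transpose. No inequality, spectral, or analytic input is needed; everything reduces to the three stated algebraic facts.
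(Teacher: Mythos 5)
Your proof is correct and follows essentially the same route as the paper's: direct expansion of $T=\sum_x|\psi_x\>\<x|$, collapsing sums via orthonormality, and invoking Lemma~\ref{lem:discriminant} to identify the surviving entry $\sqrt{p_{x'x}}\sqrt{p_{xx'}}$ with $q_{xx'}$. The only difference is that you explicitly justify $\<\psi_x|\psi_{x'}\>=\delta_{xx'}$ via column-stochasticity, a detail the paper leaves implicit.
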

	
	
	\begin{proof}
		These properties are verified as follows:
		\begin{align}
			%
			%
			T T^\dagger 
			&= \sum_{x,y\in \Omega} |\psi_x\>\<x|y\>\<\psi_y|  
			= \sum_{x\in \Omega} |\psi_x\>\<\psi_x| 
			= \Pi \\
			\nonumber \\
			%
			%
			T^\dagger T 
			&= \sum_{x,y\in \Omega} |x\>\<\psi_x | \psi_y\>\<y|  
			= \sum_{x\in \Omega} |x\>\<x| 
			= I_N \\
			\nonumber \\
			%
			%
			T^\dagger S T 
			&= \sum_{x,v\in \Omega} |x\>\<\psi_x| S |\psi_v\>\<v| \\
			&= \sum_{x,y,v,w\in \Omega} \sqrt{p_{yx}} \sqrt{p_{wv}} |x\>\<x,y| S |v,w\>\<v| \\
			&= \sum_{x,y,v,w\in \Omega} \sqrt{p_{yx}} \sqrt{p_{wv}} |x\>\<x,y|w,v\>\<v| \\
			&= \sum_{x,y,v,w\in \Omega} \sqrt{p_{yx}} \sqrt{p_{xy}} |x\>\<y| \\
			&= Q
		\end{align}
		In the last step we used the form of the discriminant matrix $Q$ derived in Lemma~\ref{lem:discriminant}.
	\end{proof}
	
	
	\subsection{Propose-accept/reject Markov chains}\label{sec:szegedy_par}
	
	We discuss how to quantize propose-accept/reject Markov chains.
	Our construction is motivated by ideas in \cite{yung2012}.
	For $x\in\Omega$, define the vectors
	\begin{equation}\label{eq:szegedy_par_isometry}
		|\psi_x\> = 
		|x\> \otimes 
		\sum_{y \in \Omega} \sqrt{s_{yx}} |y\> \otimes
		\Big( \sqrt{a_{yx}} |0\> + \sqrt{1 - a_{yx}} |1\> \Big) \in \bC^N \otimes \bC^N \otimes \bC^2.
	\end{equation}
	Let the subspace $\cA$, the isometry $T$, and the projector $\Pi$ as in the previous subsubsection.
	
	Define the quantum walk operator 
	\begin{equation}
		W(P) = R (2\Pi - I),
	\end{equation}
	where $R$ is now the controlled swap operator acting on $\bC^N\otimes\bC^N\otimes \bC^2$ given by
	\begin{equation}
		R = S \otimes |0\>\<0| + I_N \otimes I_N \otimes |1\>\<1|.
	\end{equation}
	Here $S$ is the swap operator acting on the first two registers.
	
	
	\begin{lemma}[discriminant matrix of proposed accept-reject Markov chain]\label{lem:par_disc}
		The the diagonal and off-diagonal entries discriminant matrix $Q=(q_{yx})$ of the propose-accept/reject Markov chain are given by
		\begin{align}
			q_{xy} &= \sqrt{a_{yx}} \cdot \sqrt{a_{xy}} \cdot s_{yx} \\
			q_{xx} &= p_{xx} = 1 - \sum_{y\neq x} a_{yx} \cdot s_{yx}
		\end{align}
	\end{lemma}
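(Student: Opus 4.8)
The plan is to reduce the statement to results already proved and then carry out a short substitution. By Lemma~\ref{lem:discriminant}, the discriminant matrix of \emph{any} reversible chain has entries $q_{xy}=\sqrt{p_{xy}}\sqrt{p_{yx}}$, and the propose-accept/reject chain is reversible by construction, since its acceptance probabilities obey the detailed-balance condition \eqref{eq:db_equiv_a}. For the off-diagonal entries I would therefore simply insert $p_{yx}=a_{yx}s_{yx}$ from \eqref{eq:p_yx}, giving $q_{xy}=\sqrt{a_{xy}s_{xy}}\,\sqrt{a_{yx}s_{yx}}=\sqrt{a_{yx}}\,\sqrt{a_{xy}}\,s_{yx}$, where the final step uses the symmetry $s_{xy}=s_{yx}$ of the proposal matrix. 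For the diagonal I would observe that conjugation by the diagonal matrix $D^{1/2}$ preserves diagonal entries, so $q_{xx}=p_{xx}$, and then quote \eqref{eq:p_xx} for the stated value $1-\sum_{y\neq x}a_{yx}s_{yx}$. In this route there is essentially no obstacle beyond the bookkeeping of exploiting $s_{xy}=s_{yx}$.

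Since this lemma sits in the subsection introducing the controlled-swap walk $W(P)=R(2\Pi-I)$, I think the more informative route---mirroring the computation $T^\dagger S T=Q$ of the preceding appendix lemma---is to verify directly that $T^\dagger R T=Q$ for the new vectors $|\psi_x\>$ of \eqref{eq:szegedy_par_isometry} and the controlled swap $R=S\otimes|0\>\<0|+I_N\otimes I_N\otimes|1\>\<1|$, then read off the entries. First I would apply $R$ to $|\psi_v\>$: the $|0\>$ (accepted) branch gets swapped on the first two registers, producing $\sum_y \sqrt{s_{yv}}\sqrt{a_{yv}}\,|y\>\otimes|v\>\otimes|0\>$, while the $|1\>$ (rejected) branch is left untouched as $\sum_y \sqrt{s_{yv}}\sqrt{1-a_{yv}}\,|v\>\otimes|y\>\otimes|1\>$. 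Pairing with $\<\psi_x|$ and collecting the surviving index constraints then yields the two formulas.

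The step requiring care---and the only place the two branches interact---is the diagonal. For $x\neq v$ the rejected branch retains $|v\>$ in the first register, so its overlap with $\<\psi_x|$ vanishes and only the swapped branch survives, giving the clean off-diagonal value $\sqrt{a_{yx}}\,\sqrt{a_{xy}}\,s_{yx}$. On the diagonal, however, both branches contribute, and the raw computation produces $q_{xx}=s_{xx}+\sum_y s_{yx}(1-a_{yx})$. The expected obstacle is collapsing this to $p_{xx}$: I would use stochasticity $\sum_y s_{yx}=1$ together with $a_{xx}=1$ to rewrite $\sum_y a_{yx}s_{yx}=s_{xx}+\sum_{y\neq x}a_{yx}s_{yx}$, whereupon the $s_{xx}$ terms cancel and the diagonal reduces to $1-\sum_{y\neq x}a_{yx}s_{yx}=p_{xx}$, matching \eqref{eq:p_xx}. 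Either route establishes the claim; the direct substitution is shorter, whereas the $T^\dagger R T$ computation has the added benefit of confirming that the controlled-swap construction block-encodes the correct discriminant matrix.
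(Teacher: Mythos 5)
Your first route is correct and is essentially the paper's own proof. The paper writes $q_{yx}=\sqrt{\pi_x/\pi_y}\cdot p_{yx}=\sqrt{a_{xy}/a_{yx}}\cdot p_{yx}$ (using the detailed-balance condition \eqref{eq:db_equiv_a} for the acceptance probabilities) and then substitutes the expressions \eqref{eq:p_yx} and \eqref{eq:p_xx} for the entries of $P$; your variant starts instead from Lemma~\ref{lem:discriminant}, $q_{xy}=\sqrt{p_{xy}}\sqrt{p_{yx}}$, and uses the symmetry $s_{xy}=s_{yx}$, which is the same one-line substitution argument in a slightly different order. The diagonal claim $q_{xx}=p_{xx}$ is immediate in both versions (in the paper's form because $a_{xx}=1$ makes the prefactor $\sqrt{a_{xx}/a_{xx}}=1$).

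Your second route, however, does not establish the lemma, and your closing claim that ``either route establishes the claim'' is wrong on this point: it is circular. The identity $T^\dagger R T=Q$ is precisely the statement of the lemma that \emph{follows} this one in the paper, and the paper proves it by computing $T^\dagger R T$ entrywise and then comparing the result against the entries of $Q$ supplied by the present lemma. Computing $T^\dagger R T$ on its own merely produces some matrix; to assert that this matrix equals the discriminant matrix $Q=D^{-1/2}PD^{1/2}$ you must independently know the entries of $D^{-1/2}PD^{1/2}$ --- which is exactly what Lemma~\ref{lem:par_disc} asserts and what your first route proves. So the controlled-swap computation is a consumer of this lemma, not a proof of it; presenting it as an alternative proof would invert the logical dependence in the paper.
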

	
	\begin{proof}
		This follows from $q_{yx}=\sqrt{\pi_x / \pi_y} \cdot p_{yx} = \sqrt{a_{xy}/a_{yx}} \cdot p_{yx}$ and using the expressions for the entries $p_{yx}$.
	\end{proof}
	
	
	\begin{lemma}[Block encoding of discriminant matrix]
		We have
		\begin{equation}
			T^\dagger R T = Q,
		\end{equation}
		where $Q$ is the discriminant matrix of the reversible Markov chain $P$ that is obtained by the Metropolis construction in the first section.
	\end{lemma}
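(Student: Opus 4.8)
The plan is to prove the identity entrywise. Since the isometry is $T=\sum_{x\in\Omega}|\psi_x\>\<x|$ (the $|\psi_x\>$ are orthonormal because their first registers $|x\>$ are mutually orthogonal, so $T^\dagger T = I_N$ indeed holds), we immediately get
\begin{equation}
	T^\dagger R T = \sum_{x,v\in\Omega} \<\psi_x| R |\psi_v\> \; |x\>\<v|.
\end{equation}
Thus it suffices to show that $\<\psi_x| R |\psi_v\> = q_{xv}$ for all $x,v\in\Omega$, where the diagonal and off-diagonal values of $q_{xv}$ are those computed in Lemma~\ref{lem:par_disc}. So the whole proof reduces to evaluating a single bilinear form against the explicit vectors in~\eqref{eq:szegedy_par_isometry}.

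First I would split $R$ into its two branches, $R = (S\otimes|0\>\<0|) + (I_N\otimes I_N\otimes|1\>\<1|)$, and apply each to $|\psi_v\>$. The accept branch projects the ancilla of $|\psi_v\>$ onto $|0\>$ and then swaps the first two registers, yielding $\sum_w \sqrt{s_{wv}\,a_{wv}}\,|w\>\otimes|v\>\otimes|0\>$. Contracting this against $\<\psi_x|$ forces $w=x$ in the first register and $y=v$ in the second register (recall $\<\psi_x|$ carries $\sum_y\sqrt{s_{yx}}\<y|$ on the second register), producing the off-diagonal value $\sqrt{s_{xv}a_{xv}}\cdot\sqrt{s_{vx}a_{vx}}$. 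Using the symmetry $s_{xv}=s_{vx}$ of the proposal matrix this collapses to $\sqrt{a_{vx}a_{xv}}\,s_{vx}$, which is exactly $q_{xv}$ from Lemma~\ref{lem:par_disc}. The reject branch leaves the first register as $|v\>$, so its contraction against $\<\psi_x|$ carries a factor $\<x|v\>=\delta_{xv}$ and therefore contributes only when $x=v$; there it evaluates to $\sum_y s_{yx}\,(1-a_{yx})$.

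The remaining work, and the one genuinely fiddly step, is the diagonal $x=v$. Here both branches contribute: the accept branch gives $s_{xx}a_{xx}=s_{xx}$ (using $a_{xx}=1$), and the reject branch gives $\sum_y s_{yx}(1-a_{yx})$. Adding them and invoking the column-stochasticity $\sum_y s_{yx}=1$ (valid since the symmetric $S$ is doubly stochastic) together with $a_{xx}=1$ should telescope the total to $1-\sum_{y\neq x} a_{yx}\,s_{yx}$, which matches both $q_{xx}$ and $p_{xx}$. I expect this diagonal bookkeeping to be the main obstacle: one must correctly track the $s_{xx}$ term, which appears via the accepted self-transition in the accept branch and must be disentangled from the rejection sum so that the two contributions combine cleanly via the stochasticity identity rather than double-counting. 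Once the off-diagonal and diagonal cases are matched to Lemma~\ref{lem:par_disc}, the stated identity $T^\dagger R T = Q$ follows.
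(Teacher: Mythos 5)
Your proof is correct and takes essentially the same route as the paper's: expand $T^\dagger R T$ entrywise, split $R$ into its accept (swap) and reject (identity) branches, use the resulting delta constraints together with the symmetry of $S$, and match the off-diagonal and diagonal entries to Lemma~\ref{lem:par_disc}. Your explicit diagonal bookkeeping via $a_{xx}=1$ and $\sum_y s_{yx}=1$ is the same computation the paper leaves implicit when it cites Lemma~\ref{lem:par_disc} in its final step.
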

	
	\begin{proof}
		We have
		\begin{align}
			&
			T^\dagger R T \\
			&=
			\sum_{x,v} |x\>\<\psi_x| R |\psi_v\>\<v| \\
			&=
			\sum_{x,y,v,w} 
			\sqrt{s_{yx}} \sqrt{s_{wv}} 
			\left[ \<x, y| \left( \sqrt{a_{yx}} \<0| + \sqrt{1-a_{yx}} \<1| \right) \right]
			R
			\left[ |v, w\>  \left( \sqrt{a_{wv}} |0\> + \sqrt{1-a_{wv}} |1\> \right) \right]
			|x\>\<v| \\
			&=
			\sum_{x,y,v,w} 
			\sqrt{s_{yx}} \sqrt{s_{wv}}  
			\left\{
			\sqrt{a_{yx}} \sqrt{a_{wv}} \, \<x,y|w,v\> + \sqrt{1-a_{yx}} \sqrt{1-a_{wv}} \, \<x,y|v,w\>
			\right\} 
			|x\>\<v| \\
			&=
			\sum_{x,y}
			\sqrt{s_{yx}} \sqrt{s_{xy}} \sqrt{a_{yx}} \sqrt{a_{xy}} \, |x\>\<y| +
			\sum_{x,y}
			\sqrt{s_{yx}} \sqrt{s_{yx}} \sqrt{1-a_{yx}} \sqrt{1-a_{yx}} \, |x\>\<x| \\
			&=
			\sum_x \sum_{y\neq x}
			s_{yx} \sqrt{a_{yx}} \sqrt{a_{xy}} \, |x\>\<y| +
			\sum_x \Big( s_{xx} + \sum_{y\neq x} s_{yx} (1-a_{yx})\Big) \, |x\>\<x|
		\end{align}
		In the last step we used the form of the discriminant matrix $Q$ derived in Lemma~\ref{lem:par_disc}.
	\end{proof}
	
	\subsection{Quantum enhanced propose-accept/reject Markov chains}
	
	Let $U=(\varphi_{yx})$ be an arbitrary symmetric unitary matrix.  Let $S=(s_{yx})$ be the symmetric stochastic matrix whose entries are given by 
	\begin{align}
		s_{yx} = |\varphi_{yx}|^2.
	\end{align}
	We refer to $S$ as the proposal stochastic matrix and to $U$ as the proposal unitary matrix.  We say that a propose-accept/reject Markov chain is quantum enhanced when its proposal stochastic matrix $S$ is obtain from a proposal unitary matrix. The work \cite{layden2022enhanced} provides both numerical and experimental evidence that quantum enhanced propose-accept/reject Markov chains could perform better in some situations.
	
	We show how to quantize quantum enhanced propose-accept/reject Markovs chains.  For $x\in\Omega$, define the vectors
	\begin{equation}\label{eq:szegedy_par_isometry}
		|\psi_x\> = 
		|x\> \otimes 
		\sum_{y \in \Omega} \varphi_{yx} |y\> \otimes
		\Big( \sqrt{a_{yx}} |0\> + \sqrt{1 - a_{yx}} |1\> \Big) \in \bC^N \otimes \bC^N \otimes \bC^2.
	\end{equation}
	Let $T$ and $S$ be defined as in the previous subsection.
	
	
	\begin{lemma}[Block encoding of discriminant matrix]
		We have
		\begin{equation}
			T^\dagger R T = Q,
		\end{equation}
		where $Q$ is the discriminant matrix of the quantum enhanced propose-accept/reject Markov chain with proposal unitary matrix $U$.
	\end{lemma}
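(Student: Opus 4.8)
The plan is to follow the template of the block-encoding lemma for the (non-enhanced) propose-accept/reject chain, expanding $T^\dagger R T = \sum_{x,v} \<\psi_x| R |\psi_v\> \, |x\>\<v|$ and reading off the matrix entries. The only new ingredient is that the second-register amplitudes are now the complex entries $\varphi_{yx}$ of the proposal unitary rather than the real square roots $\sqrt{s_{yx}}$, so I will have to track complex conjugates and verify that the phases cancel.

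First I would write out the two vectors explicitly,
\begin{align*}
	|\psi_v\> &= \sum_w \varphi_{wv} \, |v\> \otimes |w\> \otimes \big( \sqrt{a_{wv}} |0\> + \sqrt{1-a_{wv}} |1\> \big), \\
	\<\psi_x| &= \sum_y \overline{\varphi_{yx}} \, \<x| \otimes \<y| \otimes \big( \sqrt{a_{yx}} \<0| + \sqrt{1-a_{yx}} \<1| \big),
\end{align*}
being careful that the bra carries the complex conjugate $\overline{\varphi_{yx}}$ (the acceptance amplitudes are real and nonnegative, so they are unaffected). Next I would apply $R = S \otimes |0\>\<0| + I_N \otimes I_N \otimes |1\>\<1|$ to $|\psi_v\>$, which swaps the first two registers on the $|0\>$-component of the ancilla and acts trivially on its $|1\>$-component. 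Taking the inner product with $\<\psi_x|$ then leaves exactly two surviving contributions: a swap term proportional to $\overline{\varphi_{vx}} \varphi_{xv} \sqrt{a_{vx}} \sqrt{a_{xv}}$ (from $\<x|w\>\<y|v\>$, which forces $w=x$, $y=v$) and a diagonal term $\delta_{xv} \sum_y |\varphi_{yx}|^2 (1-a_{yx})$ (from $\<x|v\>\<y|w\>$, which forces $v=x$, $w=y$).

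The crux of the argument — and the one place where symmetry of $U$ is essential — is the simplification of the swap term. Using $\varphi_{vx} = \varphi_{xv}$, I can write $\overline{\varphi_{vx}} \varphi_{xv} = |\varphi_{xv}|^2 = s_{xv} = s_{vx}$, so the complex phases cancel and the off-diagonal entry reduces to the real quantity $s_{vx} \sqrt{a_{vx}} \sqrt{a_{xv}}$. Likewise $|\varphi_{yx}|^2 = s_{yx}$ turns the diagonal contribution into $\sum_y s_{yx}(1-a_{yx})$. I expect this cancellation to be the main obstacle: without the symmetry assumption the product $\overline{\varphi_{vx}}\varphi_{xv}$ need not be real or equal to a proposal probability, and the construction would fail to reproduce a legitimate discriminant matrix.

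Finally I would assemble the entries. The off-diagonal entries ($x \neq v$) are $q_{xv} = s_{vx}\sqrt{a_{vx}}\sqrt{a_{xv}}$, matching Lemma~\ref{lem:par_disc}. For the diagonal I would combine the $x=v$ swap contribution $s_{xx} a_{xx} = s_{xx}$ (using $a_{xx}=1$) with the term $\sum_y s_{yx}(1-a_{yx})$, and then invoke double stochasticity $\sum_y s_{yx}=1$ to rewrite $s_{xx} + \sum_{y\neq x} s_{yx}(1-a_{yx}) = 1 - \sum_{y \neq x} a_{yx} s_{yx}$, which is precisely $q_{xx}$ in Lemma~\ref{lem:par_disc}. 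This yields $T^\dagger R T = Q$ and completes the argument.
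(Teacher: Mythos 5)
Your proposal is correct and follows essentially the same route as the paper's proof: expand $T^\dagger R T$ entrywise, split into the swap ($|0\>\<0|$) and identity ($|1\>\<1|$) contributions of $R$, use the symmetry $\varphi_{xy}=\varphi_{yx}$ to reduce $\overline{\varphi_{vx}}\varphi_{xv}$ to $s_{vx}$, and match the resulting entries against Lemma~\ref{lem:par_disc}. Your handling of the diagonal (combining the $x=v$ swap term $s_{xx}$ with $\sum_y s_{yx}(1-a_{yx})$ via stochasticity) is exactly the paper's final step, spelled out slightly more explicitly.
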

	
	
	\begin{proof}
		We have
		\begin{align}
			&
			T^\dagger R T \\
			&=
			\sum_{x,v} |x\>\<\psi_x| R |\psi_v\>\<v| \\
			&=
			\sum_{x,y,v,w} 
			\bar{\varphi}_{yx} \cdot \varphi_{wv} 
			\left[ \<x, y| \left( \sqrt{a_{yx}} \<0| + \sqrt{1-a_{yx}} \<1| \right) \right]
			R
			\left[ |v, w\>  \left( \sqrt{a_{wv}} |0\> + \sqrt{1-a_{wv}} |1\> \right) \right]
			|x\>\<v| \\
			&=
			\sum_{x,y,v,w} 
			\bar{\varphi}_{yx} \cdot \varphi_{wv} 
			\left\{
			\sqrt{a_{yx}} \cdot \sqrt{a_{wv}} \, \<x,y|w,v\> + \sqrt{1-a_{yx}} \cdot \sqrt{1-a_{wv}} \, \<x,y|v,w\>
			\right\} 
			|x\>\<v| \\
			&=
			\sum_{x,y}
			\bar{\varphi}_{yx} \cdot \varphi_{xy} \cdot \sqrt{a_{yx}} \cdot \sqrt{a_{xy}} \, |x\>\<y| +
			\sum_{x,y}
			\bar{\varphi}_{yx} \cdot \varphi_{yx} \cdot \sqrt{1-a_{yx}} \cdot \sqrt{1-a_{yx}} \, |x\>\<x| \\
			&=
			\sum_{x,y}
			s_{yx} \cdot \sqrt{a_{yx}} \cdot \sqrt{a_{xy}} \, |x\>\<y| +
			\sum_{x,y}
			s_{yx} \cdot \sqrt{1-a_{yx}} \cdot \sqrt{1-a_{yx}} \, |x\>\<x| \\
			&=
			\sum_x \sum_{y\neq x}
			\sqrt{a_{yx}} \cdot \sqrt{a_{xy}} \cdot s_{yx} \, |x\>\<y| +
			\Big( 1 - \sum_{y\neq x} s_{yx} \cdot a_{yx} \Big) \, |x\>\<x|
		\end{align}
		In the last step we used the form of the discriminant matrix $Q$ derived in Lemma~\ref{lem:par_disc}.
	\end{proof}
	
	\begin{remark}
		The above derivation shows Szegedy's quantization method makes it possible to work with a richer class of proposal stochastic matrices $S$ than our quantization method.  Recall the our method requires that $S$ be expressed as a convex sum of a small number of permutation matrices.
	\end{remark}
 
	
	\section{Appendix: standard quadratic gap amplification}
	
	For the sake of completeness, we include a proof of the standard gap amplification method based on the product of two reflections.  Part of this section is based on the notes \cite[subsections 17.3 and 17.4]{childs_qa}.
	
	First, we prove that any hermitian matrix $Q\in\bC^{N\times N}$ with $\|Q\|_2\le 1$ can be written as $T S T^\dagger$, where $T : \bC^N\rightarrow \bC^{2N}$ is a simple isometry and $S\in\bC^{2N\times 2N}$ a simple reflection.  We explicitly construct a suitable $T$ by assuming access to the eigenbasis and eigenvalues of $Q$. Second, we describe how to quadratically amplify the gap $\Delta$ of $Q$, where the gap is defined to be the distance between the maximum eigenvalue $\lambda_1(Q)=1$ and the second largest eignenvalue $\lambda_2(Q) \le 1-\Delta$.
	Third, we discuss how to perform gap amplification when $T$ is any isometry and $S$ any reflection such that $T^\dagger S T=Q$.
	
	Let $|\varphi_j\>$ be an orthonormal basis of $\bC^N$ consisting of eigenvectors of $Q$ with eigenvalues $\lambda_j\in\bR$.  We sort the eigenvalue in non-increasing order, that is, $1=\lambda_1\ge \lambda_2\ge\ldots\ge \lambda_N > -1$.  We assume that the maximum eigenvalue $\lambda_1$ is equal to $1$ and is that the second largest eigenvalue is strictly smaller than $1$, that is, $\lambda_2=1-\Delta^+$ for some $\Delta^+\gneqq 0$.  We refer to $\Delta^+$ as the one-sided gap.  We also assume that the smallest eigenvalue $\lambda_N$ is strictly greater than $-1$.
	
	
	\begin{lemma}[Representation of hermitian matrices as submatrices of reflections]\label{lem:tst}
		Let $Q\in\bC^{N\times N}$ be an arbitrary hermitian matrix.  Let $|\varphi_j\>$ be an orthonormal basis of $\bC^N$ consisting of eigenvectors of $Q$ with eigenvalues $\lambda_j$.  Assume that the spectrum is contained in the interval $(-1, 1]$.  Then, there exists an isometry $T\in\bC^{2N\times N}$ and reflection $S\in\bC^{2N\times 2N}$ such that
		\begin{align}
			T^\dagger S T &= Q.
		\end{align}
	\end{lemma}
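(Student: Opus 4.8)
The plan is to exploit the supplied spectral decomposition of $Q$ and to encode each eigenvalue as the diagonal matrix element of a single-qubit reflection attached to the corresponding eigenvector. Writing $Q=\sum_j \lambda_j |\varphi_j\>\<\varphi_j|$ in the given orthonormal eigenbasis, the hypothesis that the spectrum lies in $(-1,1]$ lets me set $\theta_j=\arccos(\lambda_j)\in[0,\pi)$ for each $j$, so that $\lambda_j=\cos\theta_j$ and both $\cos(\theta_j/2)$ and $\sin(\theta_j/2)$ are real and nonnegative. The construction then views $\bC^{2N}$ as $\bC^N\otimes\bC^2$, placing all eigenbasis information into the isometry $T$ while keeping $S$ a fixed, simple reflection.

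Concretely, I would define $T:\bC^N\rightarrow\bC^N\otimes\bC^2$ by
\begin{align}
	T|\varphi_j\> = |\varphi_j\> \otimes \big( \cos(\theta_j/2)\,|0\> + \sin(\theta_j/2)\,|1\> \big).
\end{align}
Orthonormality of the $|\varphi_j\>$ together with $\cos^2(\theta_j/2)+\sin^2(\theta_j/2)=1$ immediately gives $T^\dagger T=I_N$, so $T$ is an isometry. For the reflection I take $S=I_N\otimes Z$, where $Z$ is the single-qubit Pauli operator; this is manifestly Hermitian and satisfies $S^2=I$, hence has spectrum $\{\pm1\}$ and is a genuine reflection on $\bC^{2N}$.

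It then remains to verify $T^\dagger S T=Q$, which reduces to a matrix-element computation: since $I_N\otimes Z$ flips the sign of the $|1\>$ component, one finds
\begin{align}
	\<\varphi_j| T^\dagger S T |\varphi_k\>
	= \delta_{jk}\big( \cos^2(\theta_j/2) - \sin^2(\theta_j/2) \big)
	= \delta_{jk}\cos\theta_j
	= \delta_{jk}\lambda_j,
\end{align}
using the half-angle identity, and summing over $j,k$ recovers $Q=\sum_j \lambda_j|\varphi_j\>\<\varphi_j|$. I do not expect a genuine obstacle here, as the content is a verification rather than a construction with a hard step; the only point requiring care is the precise role of the spectral hypothesis. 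The bound $|\lambda_j|\le 1$ is exactly what is needed to choose the angles $\theta_j$ real (equivalently, to keep $\sin(\theta_j/2)$ real), whereas the strict lower bound $\lambda_j>-1$, i.e.\ $\theta_j<\pi$, is \emph{not} needed for this lemma but becomes essential for the subsequent quadratic gap-amplification step, where $\theta_j=\pi$ (the value $\lambda_j=-1$) would correspond to an eigenvalue collision on the unit circle. I would flag this distinction explicitly rather than overstate what the present statement actually requires.
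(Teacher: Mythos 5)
Your proof is correct and is essentially identical to the paper's: both set $\theta_j=\arccos(\lambda_j)$, encode each eigenvalue via the state $\cos(\theta_j/2)|0\>+\sin(\theta_j/2)|1\>$ on an ancilla qubit, take $S=I\otimes Z$, and verify $T^\dagger S T=Q$ by the half-angle identity (the paper tensors the ancilla with the computational basis state $|j\>$ rather than with $|\varphi_j\>$ itself, an immaterial difference). Your closing observation --- that only $|\lambda_j|\le 1$ is needed here, while the strict bound $\lambda_j>-1$ matters only for the later gap-amplification analysis --- is accurate and a worthwhile clarification the paper does not make.
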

	
	
	\begin{proof}
		For $j\in\{1,\ldots,N\}$, choose the angles $\theta_j\in [0, \pi]$ such that $\cos(\theta_j)=\lambda_j\in (-1, 1]$, and define the states
		\begin{align}
			|\chi_j\> &= 
			|j\> \otimes \Big( \cos(\theta_j / 2) |0\> + \sin(\theta_j / 2) |1\> \Big) \in \bC^N \otimes \bC^2.
		\end{align}
		Define the isometry $T$ to be
		\begin{align}
			T 
			&= 
			\sum_{j=1}^N |\chi_j\>\<\varphi_j|
		\end{align}
		and the reflection $S$ to be
		\begin{align}
			S 
			&= 
			I \otimes Z,
		\end{align}
		where $Z=\mathrm{diag}(1, -1)$.
		
		
		Using the trigonometric identity $\cos^2 x - \sin^2 x = \cos(2x)$, we obtain
		\begin{align}
			\<\varphi_j|T^\dagger S T |\varphi_k\> 
			&= 
			\<\chi_j|S|\chi_k\> \\ 
			&= \delta_{jk} \cdot \Big( \cos^2(\theta_j/2) - \sin^2(\theta_j/2) \Big) \\
			&=
			\delta_{jk} \cdot \cos(\theta_j) \\
			&=
			\delta_{jk} \cdot \lambda_j \\
			&= \<\varphi_j|Q|\varphi_k\>
		\end{align}
		for all $j,k\in\{1,\ldots,M\}$, which is equivalent to $T^\dagger S T = Q$.
	\end{proof}
	
	\medskip
	\noindent
	We need to define some subspaces of $\bC^{2N}$. Let $\cA=\mathrm{span}\{|\chi_j\> : j\in\{1,\ldots,N\}\}$, $S\cA=\mathrm{span}\{S|\chi_j\> : j\in\{1,\ldots,N\}\}$, and $\cB=\cA+S\cA$. We will also consider $\cB^\perp$, which is the orthogonal complement of $\cB$.
	
	
	\begin{theorem}[Spectrum of product of two reflections]
		Let $U$ be the product of two reflections
		\begin{equation}
			U = S (2\Pi - I),
		\end{equation} 
		where the projector $\Pi$ is defined to be
		\begin{equation}
			\Pi = \sum_{j=1}^M |\chi_j\>\<\chi_j|.
		\end{equation}
		We assume that the maximum eigenvalue $\lambda_1=1$, the second largest eigenvalue $\lambda_2=1-\Delta^+$ for $\Delta^+\gneqq0$, and the minimum eigenvalue $\lambda_M \gneqq -1$.  The spectrum of $U$ is as follows:
		\begin{enumerate}
			\item for $j=1$, the state $|\psi_1\>=|\chi_1\>=|1\>\otimes |0\>\in\cB$ is an eigenvector of $U$ with eigenvalue $1$,
			\item for $j\in\{2,\ldots, M\}$, the orthogonal states 
			\begin{equation}
				|\psi_j^\pm\> = |\chi_j\> - \mu_j^\pm S |\chi_j\>\in\cB
			\end{equation}
			are eigenvectors with eigenvalues $\mu_j=\e^{\pm \theta_j}$, and 
			\item the state $|1\>\otimes |1\>\in\cB^\perp$ is an eigenvector with eigenvalue $1$.
		\end{enumerate}
	\end{theorem}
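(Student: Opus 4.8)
The plan is to exploit the block structure that the vectors $|\chi_j\rangle$ impose on $\bC^N\otimes\bC^2$. Since $|\chi_j\rangle = |j\rangle\otimes\big(\cos(\theta_j/2)|0\rangle + \sin(\theta_j/2)|1\rangle\big)$ is supported on the two-dimensional subspace $\cH_j = |j\rangle\otimes\bC^2$, and the $\cH_j$ are mutually orthogonal with $\bigoplus_{j=1}^N \cH_j = \bC^N\otimes\bC^2$, I would first observe that both factors of $U = S(2\Pi - I)$ respect this decomposition. Indeed, $\Pi = \sum_k |\chi_k\rangle\langle\chi_k|$ maps each $\cH_j$ into itself, restricting there to the rank-one projector $|\chi_j\rangle\langle\chi_j|$, while $S = I\otimes Z$ preserves each $\cH_j$ and acts on it as $Z$. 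Hence $U$ is block diagonal, and it suffices to diagonalize each $2\times 2$ block $U|_{\cH_j} = Z\,(2|\chi_j\rangle\langle\chi_j| - I)$, which is a product of the two reflections (about $|0\rangle$ and about $|\chi_j\rangle$) whose axes subtend the angle $\theta_j/2$.

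Next I would dispose of the degenerate index $j=1$. Because $\lambda_1 = 1$ forces $\theta_1 = 0$, we have $|\chi_1\rangle = |1\rangle\otimes|0\rangle$ and $S|\chi_1\rangle = |\chi_1\rangle$; the two reflections on $\cH_1$ therefore coincide and $U|_{\cH_1} = Z\cdot Z = I$. This yields simultaneously the first claim (the fixed vector $|\chi_1\rangle\in\cB$) and the third claim, since $|1\rangle\otimes|1\rangle$ is the orthogonal complement of $|\chi_1\rangle$ inside $\cH_1$ and lies in $\cB^\perp$ precisely because $S$ fixes $|\chi_1\rangle$, so that $\cB\cap\cH_1 = \mathrm{span}\{|\chi_1\rangle\}$.

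For $j\ge 2$ I would verify the stated eigenvectors by direct substitution. The ingredients are $\Pi|\chi_j\rangle = |\chi_j\rangle$, the overlap $\langle\chi_j|S|\chi_j\rangle = \cos^2(\theta_j/2) - \sin^2(\theta_j/2) = \cos\theta_j$, and the consequent identity $\Pi S|\chi_j\rangle = \cos(\theta_j)\,|\chi_j\rangle$ (only the $k=j$ term of $\Pi$ survives by orthogonality of the $\cH_k$), together with $S^2 = I$. Writing $|\psi_j^\pm\rangle = |\chi_j\rangle - \mu\,S|\chi_j\rangle$, a one-line computation gives $U|\psi_j^\pm\rangle = \mu|\chi_j\rangle + (1 - 2\mu\cos\theta_j)\,S|\chi_j\rangle$, and matching this against $\mu|\psi_j^\pm\rangle$ forces the characteristic equation $\mu^2 - 2\cos(\theta_j)\,\mu + 1 = 0$, whose roots are exactly $\mu_j^\pm = e^{\pm i\theta_j}$. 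The orthogonality of $|\psi_j^+\rangle$ and $|\psi_j^-\rangle$ is then automatic, since $U$ is unitary and, for $\theta_j\in(0,\pi)$, its eigenvalues $e^{\pm i\theta_j}$ are distinct.

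I would finish with a dimension count for completeness: the first and third claims exhaust the block $\cH_1$, while the pairs $|\psi_j^\pm\rangle$ exhaust each block $\cH_j$ for $j=2,\ldots,N$, giving $2 + 2(N-1) = 2N = \dim(\bC^N\otimes\bC^2)$ eigenvectors in all. The one point demanding care is that for $j\ge 2$ the vectors $|\chi_j\rangle$ and $S|\chi_j\rangle$ be genuinely linearly independent, so that $|\psi_j^+\rangle,|\psi_j^-\rangle$ are nonzero, distinct, and actually span $\cH_j$; this is exactly where the hypotheses $\lambda_2 = 1-\Delta^+ < 1$ and $\lambda_N > -1$ enter, as they ensure $\theta_j\in(0,\pi)$ and hence $\cos\theta_j\neq\pm 1$. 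This is really the only obstacle; the remainder is routine $2\times 2$ linear algebra.
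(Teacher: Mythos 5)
Your proof is correct, and its skeleton coincides with the paper's: you reduce $U$ to $2\times 2$ blocks on the mutually orthogonal invariant subspaces $|j\rangle\otimes\bC^2$ (the paper's $\cW_j$), and you dispose of $j=1$ exactly as the paper does, noting that $\theta_1=0$ makes the two reflections on that block coincide, so $U$ restricts there to the identity, which yields claims 1 and 3 simultaneously. Where you diverge is inside the blocks with $j\ge 2$: the paper explicitly computes the product of the two $2\times 2$ reflections as the rotation matrix by angle $\theta_j$, reads off its eigenvectors $(1,\pm i)^T$ and eigenvalues $e^{\pm i\theta_j}$, and then verifies that $|\chi_j\rangle - e^{\pm i\theta_j}S|\chi_j\rangle$ is proportional to those eigenvectors by a ratio computation; you instead substitute the ansatz $|\chi_j\rangle-\mu S|\chi_j\rangle$ directly into the eigenvalue equation and extract the characteristic polynomial $\mu^2-2\cos(\theta_j)\mu+1=0$. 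Your route is in fact the one the paper itself uses later for the general quantization statement (Theorem~\ref{thm:spec_q}), so nothing is lost; if anything you gain, since you make explicit three points the paper's proof leaves implicit: the orthogonality of $|\psi_j^+\rangle$ and $|\psi_j^-\rangle$ (via normality of $U$ and distinctness of $e^{\pm i\theta_j}$), the linear independence of $|\chi_j\rangle$ and $S|\chi_j\rangle$ guaranteed by $\theta_j\in(0,\pi)$ (which is precisely where the hypotheses $\lambda_2<1$ and $\lambda_M>-1$ enter), and the dimension count showing the listed eigenvectors exhaust $\bC^N\otimes\bC^2$. You also correctly restore the factor $i$ in the eigenvalues $e^{\pm i\theta_j}$, which the theorem statement drops as a typo.
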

	
	
	\begin{proof}
		It is obvious that the mutually orthogonal subspaces $\cW_j = \mathrm{span}(|j\>\otimes|0\>, |j\>\otimes|1\>\}$ are invariant under both $(2\Pi - I)$ and $S$ for $j\in\{1,\ldots,N\}$.  
		
		Let us consider an arbitrary but fixed $j\in\{1,\ldots,N\}$. We make identify $\cW_j$ with $\bC^2$ and also write $\theta$ and $\lambda$ instead of $\theta_j$ and $\lambda_j$, respectively. Using this identification, the first reflection $R=2\Pi-I$ acts as 
		\begin{equation}
			\mathcal{R}=2|\chi\>\<\chi| - I,
		\end{equation}
		where
		\begin{equation}
			|\chi\> = \cos(\theta/2) |0\> + \sin(\theta/2) |1\>,
		\end{equation} 
		and the second reflection $S=I\otimes Z$ acts as 
		\begin{equation}
			\mathcal{S}=2|0\>\<0|-I.
		\end{equation} 
		Their product $\mathcal{S}\mathcal{R}$ is equal to the rotation
		\begin{equation}
			\left(
			\begin{array}{cc}
				\cos(\theta)  & \sin(\theta) \\
				-\sin(\theta) & \cos(\theta)
			\end{array}
			\right).
		\end{equation}
		The eigenvectors and eigenvalues are
		\begin{equation}
			\left(
			\begin{array}{c}
				1 \\
				\pm i
			\end{array}
			\right)
		\end{equation}
		and $e^{\pm i \theta}=e^{\pm i\arccos{(\lambda)}}$, respectively.
		
		Observe that the vectors
		\begin{equation}
			|\chi\> - e^{\pm i\theta} \mathcal{S} |\chi\> =
			\left(
			\begin{array}{c}
				\cos(\theta/2) (1 - e^{\pm i \theta})  \\
				\sin(\theta/2) (1 + e^{\pm i \theta})
			\end{array}
			\right)
		\end{equation}
		can also be chosen as eigenvectors.  To see this, we compute the ratio of the second and first entries
		\begin{equation}
			\frac{\sin(\theta/2)}{\cos(\theta/2)} \frac{1 + e^{\pm i\theta}}{1 - e^{\pm i\theta}} =
			\frac{\sin(\theta/2)}{\cos(\theta/2)} \frac{e^{\mp i\theta/2} + e^{\pm i\theta/2}}{e^{\mp \theta/2} - e^{\pm i\theta/2}} =
			\frac{\sin(\theta/2)}{\cos(\theta/2)} \frac{2\cos(\theta/2)}{\mp 2 i \sin(\theta/2)} = \pm i.
		\end{equation}
		The above analysis proves the statements in the second point. For the statements in the first and third points, observe that $\lambda_1$ and, that, $\theta_1=0$ and $|\chi_1\>=|0\> \otimes |0\>$.  In this particular case, it turns out that the two reflections $\mathcal{R}$ and $\mathcal{S}$ coincide so their product is the identity matrix.
	\end{proof}
	
	
	\begin{lemma}[Lower bound on quantum gap]
		Let $\Delta^+$ denote the classical gap of $Q$. The quantum gap of the corresponding unitary $U$ is $\theta=\arccos(1-\Delta^+)$ and is quadratically larger than the classical gap because 
		\begin{equation}
			\theta \ge \sqrt{2\Delta^+}.
		\end{equation}
	\end{lemma}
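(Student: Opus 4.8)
The plan is to reduce the claimed inequality to the elementary fact that $\sin x \le x$ for all $x \ge 0$. First I would rewrite the one-sided gap in terms of the angle $\theta$. Since $\theta = \arccos(1 - \Delta^+)$, we have $\cos\theta = 1 - \Delta^+$, and hence
\begin{align}
\Delta^+ = 1 - \cos\theta = 2\sin^2(\theta/2),
\end{align}
where the last equality is the half-angle identity. Because the relevant eigenphases satisfy $\theta \in [0, \pi]$ (as $\lambda = \cos\theta \in (-1,1]$ in the preceding theorem), the quantity $\sin(\theta/2)$ is nonnegative, so taking square roots yields the clean identity $\sqrt{2\Delta^+} = 2\sin(\theta/2)$.

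With this substitution the target inequality $\theta \ge \sqrt{2\Delta^+}$ becomes $\theta \ge 2\sin(\theta/2)$, or equivalently, after dividing by $2$,
\begin{align}
\frac{\theta}{2} \ge \sin\left( \frac{\theta}{2} \right).
\end{align}
Writing $x = \theta/2 \in [0, \pi/2]$, this is precisely the standard inequality $x \ge \sin x$, which is valid for all $x \ge 0$. I would justify it in one line, for instance by observing that the function $h(x) = x - \sin x$ satisfies $h(0) = 0$ and $h'(x) = 1 - \cos x \ge 0$, so $h$ is nondecreasing and therefore nonnegative on $[0, \infty)$.

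There is essentially no serious obstacle here; the only point requiring a moment's care is the passage through the half-angle identity together with the observation that $\sin(\theta/2) \ge 0$ on the interval where $\theta$ lives, which is what permits taking the square root without an absolute value. Once the inequality is phrased as $\sin x \le x$, the conclusion is immediate, and it confirms the advertised quadratic amplification: for small $\Delta^+$ the quantum gap $\theta$ behaves like $\sqrt{2\Delta^+}$, a square-root improvement over the classical gap.
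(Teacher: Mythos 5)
Your proof is correct and takes essentially the same route as the paper: the paper simply writes $1-\Delta^+ = \cos\theta \ge 1 - \frac{\theta^2}{2}$ and rearranges, and your argument via the half-angle identity $1-\cos\theta = 2\sin^2(\theta/2)$ together with $\sin x \le x$ is precisely the standard one-line derivation of that cosine bound. The only difference is that you prove the elementary inequality the paper invokes without proof.
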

	
	
	\begin{proof}
		Write the second largest eigenvalue as $\lambda_2 = 1 - \Delta^+ = \cos(\theta)$ for a suitable $\theta$.  We have
		\begin{equation}
			1 - \Delta^+ = \cos(\theta) \ge 1 - \frac{\theta^2}{2},
		\end{equation}
		which implies the bound $\theta\ge\sqrt{2\Delta^+}$. 
	\end{proof}
	
	
	\begin{remark}
		We have shown that any hermitian matrix $Q$ can be written as $T^\dagger S T$ and that corresponding unitary $U$ has a quadratically larger gap.  We have constructed mathematically simple matrices for the isometry $T$ and the reflection $S$.  To achieve this, it suffices to work in the Hilbert space $\bC^N\otimes\bC^2$.  This is possible because the above  construction assumes that we have direct access to the eigenbasis and eigenvalues of $Q$.
		
		However, for quantum algorithms, we cannot rely on such mathematically simple matrices because all necessary transformations $T$ and $S$ have to be implemented by efficient quantum circuits. In particular, we need to work in the Hilbert space $\bC^M$ with $M\ge 2N$ to obtain suitable $T$ and $S$ to block encode $Q$.
	\end{remark}
	
	
	\begin{definition}[Quantization]
		Let $T\in\bC^{M\times N}$ an isometry from $\bC^N$ to $\bC^M$ with $\mathrm{im}(T)=\cA$\footnote[4]{The dimension of the subspace $\cA$ is necessarily equal to $N$.}, $\Pi\in\bC^{N\times N}$ an orthogonal projector with $\mathrm{im}(\Pi) = \cA$\footnote[5]{It is obvious that $T T^\dagger = \Pi$ and $T^\dagger T = I_N$.}, and $S\in\bC^{M\times M}$ a reflection. Let $Q\in\bC^{N\times N}$ be a hermitian matrix, $|\varphi_1\>,\ldots,|\varphi_N\>$ an orthonormal basis of $\bC^N$ consisting of eigenvectors of $Q$ with eigenvalues $\lambda_1=1\gneqq \lambda_2 = 1 - \Delta^+ \ge \lambda_3 \ge \ldots \ge \lambda_N \gneqq -1$.  We refer to $\Delta^+$ as the classical gap.
		
		We define the quantization of $Q$ to be
		\begin{align}
			U &= S(2\Pi - I)\in\bC^{M\times M}
		\end{align}
		provided that the isometry $T$ and reflection $S$ satisfy the condition
		\begin{equation}
			T^\dagger S T = Q.
		\end{equation}
		We refer to the value $\theta=\arccos(1-\Delta^+)$ as the quantum gap of the quantization $U$ of $Q$.
		
		For $j\in\{1,\ldots,N\}$, define the states
		\begin{equation}
			|\chi_j\>=T|\varphi_j\> \in \bC^M.
		\end{equation}
		Let $\mathcal{B}$ be the subspace $\mathcal{A}+S\mathcal{A}$, where $S\mathcal{A}=\{ S|\chi\> : |\chi\>\in \mathcal{A}\}$, and $\mathcal{B}^\perp$ the orthogonal complement of $\mathcal{B}$.
	\end{definition}
	
	\begin{lemma}
		The subspace spanned $|\chi_1\>,\ldots,|\chi_N\>$ coincides with the subspace $\cA$.  
	\end{lemma}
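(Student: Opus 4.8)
The plan is to prove the equality of subspaces by establishing one inclusion directly and then closing the gap with a dimension count. First I would observe that each generator $|\chi_j\>=T|\varphi_j\>$ lies in the image of $T$, which by the definition of the quantization equals $\cA$. Consequently $\mathrm{span}\{|\chi_1\>,\ldots,|\chi_N\>\}\subseteq\mathrm{im}(T)=\cA$, giving one inclusion immediately.

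For the reverse inclusion I would argue by dimension. Since $T$ is an isometry we have $T^\dagger T=I_N$, so $\<\chi_j|\chi_k\>=\<\varphi_j|T^\dagger T|\varphi_k\>=\<\varphi_j|\varphi_k\>=\delta_{jk}$, using that the $|\varphi_j\>$ form an orthonormal basis of $\bC^N$. Hence the vectors $|\chi_1\>,\ldots,|\chi_N\>$ are orthonormal, in particular linearly independent, so the subspace they span has dimension exactly $N$. As noted in the definition, $\dim\cA=N$ as well. An $N$-dimensional subspace contained in the $N$-dimensional space $\cA$ must coincide with $\cA$, which yields the claim.

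This argument carries no real obstacle: it follows from the two standard isometry identities $\mathrm{im}(T)=\cA$ and $T^\dagger T=I_N$ together with the fact that $T$ sends the orthonormal basis $|\varphi_j\>$ of $\bC^N$ to the orthonormal family $|\chi_j\>$. The only point worth stating carefully is that the injectivity of $T$, guaranteed by $T^\dagger T=I_N$, is precisely what prevents the $|\chi_j\>$ from collapsing into a lower-dimensional span, so that the dimension count is valid and the inclusion cannot be strict.
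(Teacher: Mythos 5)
Your proof is correct, but it follows a somewhat different route from the paper's. The paper proves the lemma with a single operator identity:
\begin{equation}
\sum_{j=1}^N |\chi_j\>\<\chi_j| \;=\; \sum_{j=1}^N T|\varphi_j\>\<\varphi_j|T^\dagger \;=\; T T^\dagger \;=\; \Pi,
\end{equation}
using completeness of the eigenbasis $\{|\varphi_j\>\}$ and the identity $TT^\dagger=\Pi$ from the definition of the quantization; since the image of $\sum_j|\chi_j\>\<\chi_j|$ is exactly the span of the $|\chi_j\>$ and the image of $\Pi$ is $\cA$, the two subspaces coincide. You instead argue by inclusion plus dimension count: each $|\chi_j\>=T|\varphi_j\>$ lies in $\mathrm{im}(T)=\cA$, and $T^\dagger T=I_N$ makes the family $|\chi_1\>,\ldots,|\chi_N\>$ orthonormal, so its span has dimension $N=\dim\cA$, forcing equality. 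Both arguments rest on the same two isometry identities recorded in the definition ($TT^\dagger=\Pi$ and $T^\dagger T=I_N$), so neither has a gap; the trade-off is that your version is more elementary and makes explicit that injectivity of $T$ is what prevents the span from collapsing, whereas the paper's one-line computation delivers the stronger operator statement $\Pi=\sum_j|\chi_j\>\<\chi_j|$ --- an explicit resolution of the projector in the transported basis --- which is more than the bare equality of subspaces and is the form actually exploited in the spectral analysis of $U=S(2\Pi-I)$ that follows.
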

	
	
	\begin{proof}
		We have $\sum_{j\in [N]} |\chi_j\>\<\chi_j| = \sum_{j\in [N]} T |\varphi_j\>\<\varphi_j| T^\dagger = T T^\dagger = \Pi$. 
	\end{proof}
	
	
	\begin{theorem}[Spectrum of quantization]\label{thm:spec_q}
		The subspace $\mathcal{B}$ and its orthogonal complement $\mathcal{B}^\perp$ are invariant under $U$.  The spectrum of $U$ restricted to $\mathcal{B}$ is as follows:
		\begin{enumerate}
			\item For $j=1$, the one-dimensional subspace $\cV_1$ spanned by $|\chi_1\>$ is invariant under $U$ and the eigenvector of $U$ in $\cV_1$ is 
			\begin{equation}
				|\psi_1\> = |\chi_1\> \in \mathcal{B}
			\end{equation}
			with eigenvalue $1$.
			
			\item For $j\ge 2$, the two-dimensional subspace $\cV_j$ spanned by $|\chi_j\>$ and $S|\chi_j\>$ is invariant under $U$ and the two eigenvectors of $U$ in $\cV_j$ are
			\begin{equation}
				|\psi^{\pm}_j\> = |\chi_j\> - \mu^{\pm}_j S|\chi_j\> \in \mathcal{B}
			\end{equation}
			with corresponding eigenvalues $\mu^{\pm}_j$, where
			\begin{equation}
				\mu^{\pm}_j = \lambda_j \pm i \sqrt{1-\lambda_j^2} = e^{\pm i \, \mathrm{arccos}(\lambda_j)}.
			\end{equation}
		\end{enumerate}
	\end{theorem}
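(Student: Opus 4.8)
The plan is to use that $U=S(2\Pi-I)$ is a product of two reflections, so that $\cB$ decomposes into small, mutually orthogonal $U$-invariant pieces $\cV_j$ on which the spectrum can be read off directly. First I would record the preliminary orthonormality facts: $\{|\chi_j\>\}$ is orthonormal since $\<\chi_j|\chi_k\>=\<\varphi_j|T^\dagger T|\varphi_k\>=\delta_{jk}$ by the isometry property of $T$, and $\{S|\chi_j\>\}$ is likewise orthonormal because $S$ is a reflection and hence unitary. The single computation carrying all the content is the identity $\Pi S|\chi_j\>=\lambda_j|\chi_j\>$: indeed $\<\chi_k|S|\chi_j\>=\<\varphi_k|T^\dagger S T|\varphi_j\>=\<\varphi_k|Q|\varphi_j\>=\lambda_j\delta_{jk}$, and since $\Pi=\sum_k|\chi_k\>\<\chi_k|$ this yields $\Pi S|\chi_j\>=\lambda_j|\chi_j\>$. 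This is exactly where the hypothesis $T^\dagger S T=Q$ enters.

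With this in hand I would compute the action of $U$ on the pair $|\chi_j\>,\,S|\chi_j\>$. Since $|\chi_j\>\in\cA$, we have $(2\Pi-I)|\chi_j\>=|\chi_j\>$ and therefore $U|\chi_j\>=S|\chi_j\>$. For the second generator, the identity above gives $(2\Pi-I)S|\chi_j\>=2\lambda_j|\chi_j\>-S|\chi_j\>$, and applying $S$ together with $S^2=I$ produces $U(S|\chi_j\>)=-|\chi_j\>+2\lambda_j S|\chi_j\>$. Hence each $\cV_j=\mathrm{span}\{|\chi_j\>,S|\chi_j\>\}$ is $U$-invariant, and in the ordered basis $\{|\chi_j\>,S|\chi_j\>\}$ the operator is represented by the matrix whose columns are $(0,1)^T$ and $(-1,2\lambda_j)^T$.

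Next I would assemble the global picture. The cross inner products $\<\chi_j|\chi_k\>$, $\<\chi_j|S\chi_k\>=\lambda_k\delta_{jk}$, and $\<S\chi_j|S\chi_k\>=\delta_{jk}$ all vanish for $j\neq k$, so the $\cV_j$ are mutually orthogonal and $\cB=\cA+S\cA=\bigoplus_j\cV_j$ is $U$-invariant; since $U$ is unitary, its orthogonal complement $\cB^\perp$ is invariant as well. Restricting to a fixed $\cV_j$ with $j\ge2$, the characteristic polynomial of the $2\times2$ block is $\mu^2-2\lambda_j\mu+1$, whose roots are $\mu_j^\pm=\lambda_j\pm i\sqrt{1-\lambda_j^2}=e^{\pm i\arccos(\lambda_j)}$; a direct substitution then confirms that $|\psi_j^\pm\>=|\chi_j\>-\mu_j^\pm S|\chi_j\>$ satisfies $U|\psi_j^\pm\>=\mu_j^\pm|\psi_j^\pm\>$, the defining relation reducing precisely to the characteristic equation.

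Finally I would treat the case $j=1$ and flag the one subtlety. For $j=1$ we have $\lambda_1=1$, so $\<\chi_1|S|\chi_1\>=1$ with both vectors of unit norm; Cauchy--Schwarz saturation forces $S|\chi_1\>=|\chi_1\>$, collapsing $\cV_1$ to the one-dimensional space $\mathrm{span}\{|\chi_1\>\}$ on which $U|\chi_1\>=S|\chi_1\>=|\chi_1\>$. The hard part is ensuring that for $j\ge2$ the subspace $\cV_j$ is genuinely two-dimensional, so that $|\psi_j^\pm\>$ are well-defined, distinct, and span $\cV_j$; this is where the spectral assumptions $\lambda_2=1-\Delta^+<1$ and $\lambda_N>-1$ are needed, since $|\lambda_j|<1$ guarantees both that $|\chi_j\>$ and $S|\chi_j\>$ are linearly independent and that $\mu_j^\pm$ form a genuine conjugate pair with nonzero imaginary part.
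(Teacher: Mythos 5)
Your proposal is correct and follows essentially the same route as the paper: compute $U|\chi_j\> = S|\chi_j\>$ and $US|\chi_j\> = (2\lambda_j S - I)|\chi_j\>$ (you obtain the latter via $\Pi S|\chi_j\> = \lambda_j|\chi_j\>$ from the matrix elements $\<\chi_k|S|\chi_j\> = \lambda_j\delta_{jk}$, which is equivalent to the paper's operator-algebra step $TT^\dagger S T|\varphi_j\> = TQ|\varphi_j\>$), then read the eigenvalues off the quadratic $\mu^2 - 2\lambda_j\mu + 1 = 0$, and handle $j=1$ by Cauchy--Schwarz saturation of $\<\chi_1|S|\chi_1\> = 1$. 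Your write-up is in fact slightly more complete than the paper's, since you also verify the mutual orthogonality of the $\cV_j$, the two-dimensionality of $\cV_j$ for $j\ge 2$, and the invariance of $\cB^\perp$, all of which the paper's proof leaves implicit.
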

	
	\begin{proof}
		We use the properties of the isometry $T$ to proceed.  We begin with the case $j\ge 2$, that is, $\lambda_j$ is bounded away from $1$ by at least the spectral gap $\Delta^+$. First, we obtain 
		\begin{align}
			U |\chi_j\> 
			&= S (2\Pi - I)|\chi_j\> \\
			&= S (T T^\dagger - I) T |\varphi_j\> \\
			&= 2 S T |\varphi_j\> - ST|\varphi_j\> \\
			&= S |\chi_j\>.
		\end{align}
		Second, we obtain
		\begin{align}
			U S|\chi_j\> 
			&= S(2\Pi - I) S T |\varphi_j\> \\
			&= S(2 T T^\dagger - I) ST |\varphi_j\> \\
			&= (2S T T^\dagger S T - T) |\varphi_j\> \\
			&= (2S T Q - T) |\varphi_j\> \\
			&= (2\lambda_j S T - T) |\varphi_j\> \\
			&= (2\lambda_j S - I) |\chi_j\>.
		\end{align}
		We see that the subspace spanned by $|\chi_j\>$ and $S|\chi_j\>$ is invariant under $U$ so we can find eigenvectors of $U$ within this subspace. Let
		\begin{equation}
			|\psi^{\pm}_j\> = |\chi_j\> - \mu^{\pm}_j S|\chi_j\>.
		\end{equation}
		We have
		\begin{align}
			U|\psi^\pm_j\> 
			&= S|\chi_j\> - \mu^\pm_j (2\lambda_j S - I)|\chi_j\> \\
			&= \mu^\pm_j |\chi_j\>  - (2\lambda_j \mu^\pm_j - 1) S |\chi_j\>
		\end{align}
		Therefore, $|\psi^\pm_j\>$ is an eigenvector of $U$ with eigenvalue $\mu^\pm_j$ provided that 
		\begin{equation}
			(\mu^\pm_j)^2 - 2 \lambda_j\mu^\pm_j + 1 = 0,
		\end{equation}
		that is,
		\begin{equation}
			\mu^\pm_j = \lambda_j \pm i \sqrt{1 - \lambda_j^2} = e^{\pm i \, \mathrm{arccos}(\lambda_j)}.
		\end{equation}
		
		We now consider the case $j=1$.  We have
		\begin{equation}
			1 = \<\varphi_1|Q|\varphi_1\> = \<\varphi_1|T^\dagger S T|\varphi_1\> = \<\chi_1|S|\chi_1\>,
		\end{equation}
		implying that $|\chi_1\>$ is an eigenvector of $S$ with eigenvalue $1$.  We have
		\begin{equation}
			U|\chi_1\> = S(2\Pi - I)|\chi_1\> =  S |\chi_1\> = 1,
		\end{equation}
		where we used $\Pi|\chi\>=T T^\dagger T|\varphi_1\>= T|\varphi_1\>=|\chi_1\>$.
	\end{proof}
	
\end{document}